\newcommand{\fff}{f:\bit^n \times \bit^n \times \bit^n \ra \bit}
\newcommand{\ffff}{f:(\bit^n)^k \ra \bit}
\newcommand{\ggg}{g:\bit^n \times \bit^n \ra \bit}
\newcommand{\F}{\mathbbm{F}}
\newcommand{\select}{\sigma}
\newcommand{\Nat}{\mathbbm{N}} %natural numbers
\newcommand{\Z}{\mathbbm{Z}} %natural numbers
\newcommand{\dmax}{d_\mathrm{max}}
\newcommand{\remove}[1]{}
\newcommand{\myparagraph}[1]{\smallskip \noindent {\bf #1}}
\newcommand{\half}{\frac{1}{2}} 
\newcommand{\lset}{\left \{} \newcommand{\rset}{\right \}}
\newcommand{\pair}[1]{\langle #1 \rangle}
\newcommand{\set}[1]{\lset #1 \rset}   \newcommand{\bit}{\{0,1\}}
\newcommand{\eqndef}{\stackrel{\mbox{\tiny def}}{{=}}}
\newcommand{\conc}{\mbox{\scriptsize $\circ$}}
\newcommand{\eqdef}{\eqndef}
\newcommand{\calF}{\F}
\newcommand{\calR}{{\cal R}}
\newcommand{\ra}{\rightarrow}
\newcommand{\se}{{\subseteq}}
\newcommand{\rank}{\operatorname{rank}}
\newcommand{\row}{{\rm row}}
\newcommand{\poly}{{\rm poly}}
\newcommand{\email}[1]{E-mail: #1}%%%{{\tt #1}}
\newcommand{\fax}[1]{}
\newtheorem{theorem}{Theorem}[section]
\newtheorem{lemma}[theorem]{Lemma}
\newtheorem{definition}[theorem]{Definition}
\newtheorem{corollary}[theorem]{Corollary}
\newtheorem{proposition}[theorem]{Proposition}
\newtheorem{Remark}[theorem]{Remark}
\newenvironment{remark}{\begin{Remark}\begin{rm}}{\end{rm}\end{Remark}}
\newtheorem{open}{Question}
\newtheorem{conjecture}{Conjecture}
\newtheorem{observation}[theorem]{Observation}
\newenvironment{proofapp}[1]{\noindent {\bf Proof \ #1:} \hspace{.677em}}%
  {\qed}
\newcommand{\lan}{\langle}
\newcommand{\ran}{\rangle}
\newcommand{\mC}{\mathcal{C}}
\begin{document}

\title{Partition Arguments in Multiparty Communication Complexity}

\author{Jan Draisma\thanks{Department of Mathematics and Computer Science,
Technische Universiteit Eindhoven; and CWI Amsterdam.
\email{j.draisma@tue.nl}. Partially supported by MSRI (Berkeley) and by NWO mathematics cluster
DIAMANT.} \and Eyal Kushilevitz\thanks{Computer Science
Department, Technion Israel Institute of Technology, Haifa.
\email{eyalk@cs.technion.ac.il}. Research supported by grant
1310/06 from the Israel Science Foundation (ISF).} \and Enav
Weinreb\thanks{Computer Science Department, Technion Israel
Institute of Technology, Haifa. \email{weinreb@cs.technion.ac.il}.
Research supported by grant 1310/06 from the Israel Science
Foundation (ISF).} }

%\institute{}
\maketitle

\begin{abstract}
Consider the ``Number in Hand'' multiparty communication
complexity model, where $k$ players holding
inputs $x_1,\ldots,x_k\in\bit^n$ communicate
to compute the value $f(x_1,\ldots,x_k)$ of a function $f$ known to all of them. The main lower
bound technique for the communication complexity of such problems
is that of {\em partition arguments}: partition the $k$ players
into two disjoint sets of players and find a lower bound for the
induced two-party communication complexity problem.

In this paper, we study the power of partition arguments. Our two
main results are very different in nature: \\
(i) For {\em randomized} communication complexity, we show that
partition arguments may yield bounds that are exponentially far
from the true communication complexity. Specifically, we prove
that there exists a $3$-argument function $f$ whose communication
complexity is $\Omega(n)$, while partition arguments can only
yield an $\Omega(\log n)$ lower bound. The same holds for {\em
nondeterministic}
communication complexity. \\
(ii) For
%% the case of
{\em deterministic} communication complexity, we prove that
finding significant gaps between the true communication complexity
and the best lower bound that can be obtained via partition
arguments, would imply progress on a generalized version of the
``log-rank conjecture'' in communication complexity.
%This explains why partition arguments
%seem to yield good lower bounds in the deterministic case.

We conclude with two results on the multiparty ``fooling set technique'',
another method for obtaining communication complexity lower bounds.
\end{abstract}

\section{Introduction}
\label{s:intro} Yao's two-party communication
complexity~\cite{KN97,Yao} is a well-studied model, of which
several extensions to multiparty settings were considered in the
literature. In this paper, we consider the following extension
that is arguably the simplest one (alternative multiparty models
are discussed below): there are $k>2$ players, $P_1,\ldots,P_k$,
where each player $P_i$ holds an input $x_i\in\bit^n$. The players
communicate by using a broadcast channel (sometimes referred to as
a ``blackboard'' in the communication complexity literature) and
their goal is to compute some function $f$ evaluated at their
inputs, i.e., the value $f(x_1,\ldots,x_k)$, while minimizing the
number of bits communicated.\footnote{If broadcast is not
available, but rather the players are connected via point-to-point
channels, this influences the communication complexity by a factor
of at most $k$; we will mostly view $k$ as a constant (e.g.,
$k=3$) and hence the difference is minor.}

As in the two-party case, the most interesting question for such a
model is proving lower bounds, with an emphasis on ``generic''
methods. The main lower bound method known for the above
multiparty model is the so-called {\em partition argument} method.
Namely, the $k$ players are partitioned into two
%%  non-empty
disjoint sets of players, $A$ and $B$, and we look at the induced
two-argument function $f^{A,B}$ defined by $f^{A,B}(\{x_i\}_{i\in
A},\{x_i\}_{i\in B})\eqdef f(x_1,\ldots,x_k)$. Then, by applying
any of the various lower-bound methods known for the two-party
case, we obtain some lower bound $\ell_{A,B}$ on the (two-party)
communication complexity of $f^{A,B}$. This value is obviously a
lower bound also for the (multiparty) communication complexity of
$f$. Finally, the partition arguments bound $\ell_{\rm PAR}$ is
the best lower bound that can be obtained in this way; namely,
$\ell_{\rm PAR}=\max_{A,B}\set{\ell_{A,B}}$, where the maximum is
taken over all possible partitions $A,B$ as above.

The fundamental question studied in this paper is whether partition
arguments suffice for determining the multiparty communication
complexity of every $k$-argument function $f$; or, put differently,
how close the partition argument bound is to the true communication
complexity of $f$. More specifically,

\myparagraph{Question:} Is there a constant $c \ge 1$ such that,
for every $k$-argument function $f$, the $k$-party communication
complexity of $f$ is between $\ell_{\rm PAR}$ and $(\ell_{\rm
PAR})^c$~?

\smallskip

As usual, this question can be studied with respect to various
communication complexity models (deterministic, non-deterministic,
randomized etc.). If the answer to this question is positive, we
will say that partition arguments are {\em universal} in the
corresponding model.

\myparagraph{Our Results:} On the one hand, for the deterministic case
(Section~\ref{s:det}), we
explain the current state of affairs where partition arguments seem to
yield essentially the best known lower bounds.  We do this by relating
the above question, in the deterministic setting, to one of the central
open problems in the study of communication complexity, the so-called
``log-rank conjecture'' (see~\cite{RS95, NW95a} and the references
therein), stating that the deterministic communication complexity of every
two-argument boolean function $g$ is polynomially-related to the log of
the algebraic rank (over the reals) of the matrix $M_g$ corresponding to
the function. Specifically, we show that if a natural generalization of
the log-rank conjecture to $k$ players holds then the answer to the above
question is positive; namely, in this case, the partition arguments bound
is polynomially close to the true multiparty communication complexity. In
other words, a negative answer to the above question implies refuting
the generalized log-rank conjecture. Furthermore, we characterize
the collections of partitions one has to consider in order to decide if
the rank lower bound is applicable for a given $k$-argument function.
Specifically, these are the collections of partitions such that for
every two players $P_i$ and $P_j$ there is a partition $A, B$ such that
$i \in A$ and $j \in B$. That is, if all induced two-argument functions
in such a collection are easy, then, assuming the generalized log-rank
conjecture, the original function is easy as well.

On the other hand, we show that both in the case of non-deterministic
communication complexity (Subsection~\ref{ssec:nd}) and in the case of
randomized
communication complexity (Subsection~\ref{ssec:rnd}), the answer to the above question is
negative in a strong sense. Namely, there exists a $3$-argument
function $f$, for which \remove{the partition-argument bound cannot yield
a lower bound better than $\Omega(\log n)$ (or, in other words,}
each of the induced two-party functions has an upper bound of
$O(\log n)$, while the true $3$-party communication complexity
of $f$ is exponentially larger, i.e. $\Omega(n)$. Of course, other
methods than partition arguments are needed here to prove the
lower bound on the complexity of $f$. Specifically, we pick $f$ at
random from a carefully designed family of functions, where the
induced two-argument functions for all of them have low
complexity, and show that with positive probability we will get a
function with large multiparty communication
complexity.\footnote{Note that, in order to give a negative answer
to the above question, it is enough to discuss the case $k=3$.
This immediately yields a gap also for larger values of $k$.} We
also show that, in contrast to the situation with respect to the
deterministic communication complexity of {\em functions} (as
described above), there exist $k$-party {\em search problems
(relations)} whose deterministic communication complexity is
$\Omega(n)$ while all their induced relations can be solved
without communicating at all (Subsection~\ref{ssec:rel}).

We accompany the above main results by two additional results on
the so-called ``fooling set technique'' in the multiparty case
(Section~\ref{a:fs}). First, we prove the existence of a
$3$-argument function $f$ for which there exists a large fooling
set that implies an $\Omega(n)$ lower bound on the deterministic
communication complexity of $f$, but where all the induced
two-party functions have only very small fooling sets. However,
extending results from~\cite{DHS96} for the two-party case, we
prove that lower bounds on the communication complexity of a
$k$-argument function obtained with the fooling set technique
cannot be significantly better than those obtained with the rank
lower bound.

\myparagraph{Related work:} Multiparty communication complexity
was studied in other models as well. Dolev and
Feder~\cite{DF92,DF89} (see also~\cite{Duris04, DR98}) studied a
$k$-party model where the communication is managed via an
additional party referred to as the ``coordinator''. Their main
result is a proof that the maximal gap between the deterministic
and the non-deterministic communication complexity of every
function is quadratic even in this multiparty setting. Their
motivation was bridging between the two-party communication
complexity model and the model of decision trees, where both have
such quadratic gaps. Our model differs from theirs in terms of the
communication among players and in that we concentrate on the case
of a small number of players.

Another popular model in the study of multiparty communication
complexity is the so-called ``Number On the Forehead'' (NOF)
model~\cite{CFL,BNS}, where each party $P_i$ gets all the inputs
$x_1,\ldots,x_k$ {\em except} for $x_i$. This model is less
natural in distributed systems settings but it has a wide variety
of other applications. Note that in the NOF
model, partition arguments are useless because any two players
when put together know the entire input to $f$.

Our results concern the ``Number in Hand'' $k$-party model. Lower
bound techniques different from partition arguments were presented
by Chakrabarti et al.~\cite{CKS03}, following~\cite{AMS99,
BJKS04}. These lower bounds are for the ``disjointness with unique
intersection'' {\em promise problem}. In this problem, the $k$
inputs are subsets of a universe of size $n$, together with the
promise that the $k$ sets are either pairwise disjoint, in which
case the output is $0$, or uniquely intersecting, i.e. they have
one element in common but are otherwise disjoint, in which case
the output is $1$. Note that partition arguments are useless for
this promise problem: any two inputs determine the output.
Chakrabarti et al. prove a near optimal lower bound of $\Omega(n /
k \log k)$ for this function, using information theoretical tools
from~\cite{BJKS04}. Their result is improved to the optimal lower
bound of $\Omega(n/k)$ in \cite{Gronemeier09}. This problem has
applications to the space complexity of approximating frequency
moments in the data stream model (see~\cite{AJKS02, AMS99}). As
mentioned, we provide additional examples where partition
arguments fail to give good lower bounds for the deterministic
communication complexity of {\em relations}. It should be noted,
however, that there are several contexts where the communication
complexity of relations and, in particular, of promise problems,
seems to behave differently than that of {\em functions} (e.g, the
context of the ``direct-sum'' problem~\cite{FKNN95}). Indeed, for
functions, no generic lower bound technique different than
partition arguments is known.
%% (as opposed to relations and promise problems).

\remove{\myparagraph{Organization.} In Section~\ref{s:prel}, we
discuss some required background in communication complexity.
Then, in Section~\ref{s:det}, we discuss the power of partition
arguments for deterministic communication complexity and show a
connection between the universality of the partition arguments
technique and the log-rank conjecture in communication
complexity. In particular, in Section~\ref{ss:3} we prove the
above connection for three-party communication complexity using
elementary linear algebraic tools, and in Section~\ref{ss:k} we
extend the connection to the $k$-party communication complexity
case, where tools from tensor algebra are used. Then, in
Section~\ref{s:nd}, we show that for the randomized and
non-deterministic models, there are examples of functions in which
partition arguments are extremely weak. We note that
Section~\ref{s:nd} does not rely on Section~\ref{s:det} (hence,
the reader may choose to skip parts of the proof). \remove{Finally, in
Appendix~\ref{a:fs}, we discuss relations between the ``fooling
set'' lower bound technique in communication complexity and
partition arguments.}
}

\section{Preliminaries}
\label{s:prel}
\myparagraph{Notation.}
For a positive integer $m$, we denote by $[m]$ the set
$\set{1,2,\dots,m}$. All the logarithms in this paper are to the
base $2$.
%%  by default, unless stated otherwise.
For two strings $x,y \in \bit^*$, we use $x \conc y$ to denote
their concatenation. We refer by $poly(n)$ to the set of functions
that are asymptotically bounded by a polynomial in $n$.

\myparagraph{Two-Party Communication Complexity.}
For a Boolean function $\ggg$, denote by $D(g)$ the deterministic
communication complexity of $g$, i.e., the number of bits Alice,
holding $x \in \bit^n$, and Bob, holding $y \in \bit^n$, need to
exchange in order to jointly compute $g(x,y)$.  Denote by $M_g \in
\bit^{2^n \times 2^n}$ the matrix representing $g$, i.e.,
$M_g[x,y] = g(x,y)$ for every $(x,y) \in \bit^n \times \bit^n$.

\myparagraph{$k$-Party Communication Complexity.}
Let $\ffff$ be a Boolean function. A set of $k$ players $P_1,\dots,P_k$
hold inputs $x_1,\dots,x_k$ respectively, and wish to compute
$f(x_1,\dots,x_k)$. The means of communication is broadcast. Again,
we denote by $D(f)$ the complexity of the best deterministic protocol
for computing $f$ in this model, where the complexity of a protocol
is the number of bits sent on the worst-case input. Generalizing the
two-argument case, we represent $f$ using a $k$-dimensional tensor $M_f$.
For any partition $A,B$ of $[k]$ we denote by $f^{A,B}$ the induced
two-argument function.

\myparagraph{Non-Deterministic Communication Complexity.} For $b
\in \bit$, a $b$-monochromatic (combinatorial) rectangle of a
function $\ggg$ is a set of pairs of the form $X \times Y$, where
$X, Y \se \bit^n$, such that for every $x \in X$ and $y \in Y$ we
have that $g(x,y)=b$. A $b$-cover of $g$ of size $t$ is a set of
(possibly overlapping) $b$-monochromatic rectangles $\calR =
\set{R_1, \dots, R_t}$ such that, for every pair $(x,y) \in \bit^n
\times \bit^n$, if $g(x,y)=b$ then there exists an index $i \in
[t]$ such that $(x,y) \in R_i$. Denote by $C^b(g)$ the size of the
smallest $b$-cover of $g$. The non-deterministic communication
complexity of $g$ is denoted by $N^1(g) = \log C^1(g)$. Similarly,
the co-non-deterministic communication complexity of $g$ is
denoted by $N^0(g) = \log C^0(g)$. Finally, denote $C(g) =
C^0(g)+C^1(g)$ and $N(g) = \log C(g) \leq \max(N^0(g), N^1(g))+1$.
(An alternative to this combinatorial definition asks for the
number of bits that the parties need to exchange so as to verify
that $f(x,y)=b$.) All these definitions generalize naturally to
$k$-argument functions, where we consider combinatorial $k$-boxes
$B = X_1 \times \dots \times X_k$, rather than combinatorial
rectangles.

\myparagraph{Randomized Communication Complexity.}
For a function $\ggg$ and a positive number $0 \leq\epsilon <
\half$, denote by $R_\epsilon(g)$ the communication complexity of
the best randomized protocol for $g$ that errs on every input with
probability at most $\epsilon$, and denote $R(g) =
R_{\frac{1}{3}}(g)$. Newman~\cite{Newman91} proved that the
\emph{public-coin} model, where the players share a public random
string, is equivalent, up to an additive factor of $O(\log n)$
communication, to the \emph{private-coin} model, where each party
uses a private independent random string. Moreover, he proved that
w.l.o.g, the number of random strings used by the players in the
public-coin model is polynomial in $n$. All these results can be
easily extended to $k$-argument functions.
\begin{lemma}[\cite{Newman91}]\label{l:Newman}
There exist constants $c > 0, c'\ge 1$ such that for every Boolean function $\ggg$, if $R(g) = r(n)$ then there exists a protocol for $g$ in the
public-coin model with communication complexity $c'\cdot r(n)$
that uses random strings taken from a set of size $O(n^c)$.
\end{lemma}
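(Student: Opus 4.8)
The plan is to reproduce the standard probabilistic (``sampling'') argument behind Newman's theorem. First I would observe that a private-coin protocol is in particular a public-coin protocol: regard the public random string as a pair $\rho=(r_1,r_2)$ of independent blocks, one per party, and have each party read only its own block; this changes neither the communication nor the error. So we may assume we are given a public-coin protocol $\Pi$ for $g$ with communication $r(n)$ and worst-case error at most $1/3$. Next I would amplify: running $\Pi$ independently a constant number $\kappa$ of times on fresh public randomness and taking the majority vote gives, by a Chernoff bound, a public-coin protocol $\Pi'$ with communication $\kappa\cdot r(n)$ and worst-case error at most some fixed $\delta<1/4$ (say $\delta=1/8$); set $c'=\kappa$.

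The heart of the proof is the following subsampling step. For a fixed public string $\rho$, write $\Pi'_\rho$ for the induced \emph{deterministic} protocol, and for an input $(x,y)$ let $Z_\rho(x,y)\in\bit$ be the indicator of the event $\Pi'_\rho(x,y)\neq g(x,y)$; correctness of $\Pi'$ says $\E_\rho[Z_\rho(x,y)]\le\delta$ for every $(x,y)$. Now fix $t=\lceil C n\rceil$ for a suitable absolute constant $C$ and sample $\rho_1,\dots,\rho_t$ independently from the distribution of public strings. For each fixed input $(x,y)$ the values $Z_{\rho_1}(x,y),\dots,Z_{\rho_t}(x,y)$ are i.i.d.\ with mean at most $\delta<1/3$, so Hoeffding's inequality gives $\Pr_{\rho_1,\dots,\rho_t}\!\big[\tfrac1t\sum_{i=1}^t Z_{\rho_i}(x,y)\ge \tfrac13\big]\le e^{-\Omega(t)}<2^{-2n}$ once $C$ is chosen large enough. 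Taking a union bound over all $2^{2n}$ inputs $(x,y)\in\bit^n\times\bit^n$, with positive probability the sampled strings $\rho_1,\dots,\rho_t$ have the property that for \emph{every} input strictly fewer than a $1/3$-fraction of the deterministic protocols $\Pi'_{\rho_1},\dots,\Pi'_{\rho_t}$ err on it; fix one such tuple.

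Finally, the desired protocol uses as its public random string only an index $i$ drawn uniformly from $[t]$ --- i.e.\ from a set of size $t=O(n)=O(n^c)$ with $c=1$ --- after which the parties simply execute the deterministic protocol $\Pi'_{\rho_i}$. Its communication is $\kappa\cdot r(n)=c'\cdot r(n)$, and by the choice of $\rho_1,\dots,\rho_t$ its error on every input is below $1/3$, as required. The extension to $k$-argument functions is verbatim, with $2^{kn}$ replacing $2^{2n}$ in the union bound, which still only forces $t=O(n)$ for constant $k$. The only step that is not pure bookkeeping is the subsampling argument above --- the fact that $O(n)$ deterministic protocols can be made simultaneously good for all $2^{2n}$ inputs --- and that is precisely Newman's observation; everything else (error amplification, and how the parties consult a shared random string in the public-coin model) is routine.
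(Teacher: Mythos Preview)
The paper does not give its own proof of this lemma; it is simply quoted from \cite{Newman91} and used as a black box. Your argument is the standard proof of Newman's theorem and is correct: the amplification step buys the gap needed for the Hoeffding bound, the union bound over $2^{2n}$ inputs forces $t=O(n)$, and the final protocol draws a uniform index into the fixed sample. Nothing is missing, and the extension to $k$ parties is indeed verbatim with $2^{kn}$ in the union bound.
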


\section{The Deterministic Case}
\label{s:det}
In this section we study the power of partition-argument lower bounds
in the deterministic case.

\begin{open}
\label{o:gapk} Let $k\ge 3$ be a constant integer and $f$ be a
$k$-argument function. What is the maximal gap between $D(f)$ and
the maximum $\max_{A,B} D(f^{A,B})$ over all partitions of $[k]$
into (disjoint) subsets $A$ and $B$?
\end{open}

In Section~\ref{ss:k}, we use multilinear algebra to show that under a
generalized version of the well known log-rank conjecture, partition
arguments are universal for multi-party communication complexity. We
also characterize the set of partitions one needs to study in order to
analyze the communication complexity of a $k$-argument function. Before
that, we give in Section~\ref{ss:3} a simpler proof for the case $k=3$.
This proof avoids the slightly more sophisticated multilinear algebra
needed for the general case.

\label{log-rank} Let $g:\bit^n \times \bit^n \ra \bit$ be a
Boolean two-argument function and $M_g \in \bit^{2^n \times 2^n}$
be the matrix representing it. It is well known that $\log
\rank(M_g)$ serves as a lower bound on the (two-party)
deterministic communication complexity of $g$.

\begin{theorem}[\cite{MS82}]
For any function $\ggg$, we have $D(g) \geq \log \rank(M_g).$
\end{theorem}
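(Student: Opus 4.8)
The plan is to prove the classical rank lower bound $D(g) \ge \log \rank(M_g)$ by analyzing the combinatorial structure forced by any deterministic protocol. First I would recall the standard fact that a deterministic communication protocol of cost $c$ induces a partition of the input space $\bit^n \times \bit^n$ into at most $2^c$ combinatorial rectangles, each of which is monochromatic with respect to $g$ (i.e., $g$ is constant on each rectangle). This is proved by induction on the protocol tree: each leaf of the tree corresponds to a set of inputs that is a rectangle (a product set $X \times Y$), because at each node the set of inputs reaching that node splits according to a bit sent by Alice --- which depends only on her input --- or a bit sent by Bob --- which depends only on his --- and in either case the ``reaching'' set stays a product set; correctness of the protocol forces $g$ to be constant on each leaf's rectangle.

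Next I would translate this rectangle partition into a statement about $M_g$. Label the monochromatic rectangles $R_1, \dots, R_t$ with $t \le 2^{D(g)}$, and for each $R_i = X_i \times Y_i$ let $b_i \in \bit$ be the value $g$ takes on $R_i$. For each $i$, define the rank-one matrix $N_i \in \bit^{2^n \times 2^n}$ by $N_i[x,y] = 1$ if $(x,y) \in R_i$ and $0$ otherwise; this is the outer product $\mathbbm{1}_{X_i} \cdot \mathbbm{1}_{Y_i}^{\top}$, so $\rank(N_i) \le 1$ (over the reals). Since the rectangles partition the whole input space, for every $(x,y)$ there is exactly one $i$ with $(x,y) \in R_i$, and then $M_g[x,y] = b_i$. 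Hence
\[
M_g = \sum_{i=1}^{t} b_i \, N_i,
\]
a sum of at most $t$ matrices each of rank at most $1$ (the terms with $b_i = 0$ vanish, which only helps). By subadditivity of rank, $\rank(M_g) \le \sum_{i=1}^t \rank(b_i N_i) \le t \le 2^{D(g)}$. Taking logarithms gives $\log \rank(M_g) \le D(g)$, which is the claim.

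The main obstacle --- really the only subtle point --- is the induction establishing that every leaf of the protocol tree corresponds to a combinatorial \emph{rectangle} and that correctness forces monochromaticity; one must be careful that the protocol is deterministic and that the next bit sent depends only on the transcript so far together with the speaker's own input, which is exactly what keeps the reaching-set a product set. The linear-algebra half (rank subadditivity, rank of an outer product) is routine. I would also remark that the bound holds over any field, so in particular over the reals as needed for the log-rank conjecture discussion, since the only properties used are that an outer product has rank $\le 1$ and that rank is subadditive under matrix addition.
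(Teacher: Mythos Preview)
Your proof is correct and is precisely the standard argument the paper has in mind: the paper does not give a full proof of this cited theorem, but a few lines later it sketches exactly this reasoning for the $k$-party generalization, noting that ``any deterministic protocol whose complexity is $c$ induces a partition of the tensor $M_f$ into $2^c$ monochromatic $k$-boxes'' which ``are, in particular, rank-$1$ tensors whose sum is $M_f$.'' Your two-party version follows the same decomposition-into-rank-one-rectangles idea, so nothing more needs to be said.
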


An important open problem in communication complexity is
whether the converse is true. This problem is known as the log-rank
conjecture. Formally,

\begin{conjecture}[Log Rank Conjecture]
\label{conj:org} There exists a constant $c \ge 1$ such that every
function $g:\bit^n \times \bit^n \ra \bit$ satisfies $D(g) =
O(\log^c \rank(M_g))$.
\end{conjecture}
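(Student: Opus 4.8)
The statement above is the log-rank conjecture, and no proof of it is currently known; what follows is the line of attack that essentially all work on the conjecture follows, together with the point at which it stalls. Write $r=\rank(M_g)$. The aim is to turn the low-rank structure of $M_g$ into a cheap deterministic protocol by repeatedly isolating large monochromatic combinatorial rectangles. The first step is a reformulation that loses nothing beyond the polynomial in the exponent $c$: it suffices to prove that \emph{every} Boolean matrix of real rank at most $r$ contains a monochromatic submatrix $X\times Y$ covering at least a $2^{-\polylog(r)}$ fraction of its entries. Granting this, a standard argument (in the spirit of~\cite{NW95a}) converts such a rectangle bound into a deterministic protocol for $g$ of cost $\polylog(r)$; conversely $\rank(M_g)$ is at most the size of any partition of $M_g$ into monochromatic rectangles, so the reduction is essentially an equivalence.

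\noindent\textbf{Extracting the rectangle.} It remains to produce the rectangle. Write $M_g=UV$ with rows viewed as vectors $u_x\in\mathbbm{R}^{r}$ and columns as $v_y\in\mathbbm{R}^{r}$, so that a $1$-rectangle $X\times Y$ is a set of rows and columns with $\langle u_x,v_y\rangle=1$ for all $x\in X$, $y\in Y$, and a $0$-rectangle is one with all these inner products $0$. When a constant fraction of the entries are $1$ one extracts a $1$-rectangle by iteratively restricting to the $1$-columns of a heavy row; otherwise one works with the orthogonality structure of the $0$-entries and amplifies the rectangle size by spectral/discrepancy estimates. The bounds obtainable this way make $D(g)$ polynomial in $r$ --- far from the conjectured polylogarithmic dependence. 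To push the density of the rectangle down to $2^{-\polylog(r)}$ one would pass to a finite-field picture: a Boolean matrix of $\F_2$-rank at most $r$ has the form $M[x,y]=\langle x',y'\rangle_{\F_2}$ with $x',y'\in\F_2^{r}$, a $0$-rectangle becomes a pair of large sets $A,B\se\F_2^{r}$ with $A\perp B$, and one would force such a pair using an additive-combinatorics input of ``approximate duality'' type; since the real rank and the $\F_2$-rank differ, this translation must be carried out with care.

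\noindent\textbf{Main obstacle.} The crux, and the reason the conjecture remains open, is precisely the jump from a monochromatic rectangle of density $2^{-r^{\Omega(1)}}$ to one of density $2^{-\polylog(r)}$ in a rank-$r$ Boolean matrix: no linear-algebraic certificate is known that forces such a rectangle from the rank alone, and the matrices that witness the need for an exponent $c>1$ already rule out the most naive bounds. Any proof must therefore exploit structure of $M_g$ beyond its rank, and --- since the protocol is built by peeling rectangles iteratively --- must retain enough control of that structure through all $\polylog(r)$ rounds, whereas the current arguments extract one large rectangle and then lose the finer structure.
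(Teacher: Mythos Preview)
The paper does not prove this statement: it is explicitly labelled a \emph{conjecture} and is used only as a hypothesis in later results (Theorems~\ref{t:logRankPartition3party}, \ref{t:logRankPartition}, \ref{thm:Simple}). There is therefore no ``paper's own proof'' to compare your attempt against.

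Your write-up is not a proof either, and you say so yourself in the first line. What you have produced is a short survey of the Nisan--Wigderson reduction (large monochromatic rectangle $\Rightarrow$ protocol) together with an honest statement of where current techniques stall. That discussion is broadly accurate, but it is commentary, not an argument that establishes the conjecture. In particular, the ``extracting the rectangle'' paragraph does not actually extract a rectangle of the required density; it only gestures at spectral and approximate-duality ideas and then concedes they fall short. So as a proof attempt there is a genuine gap --- namely, the entire conjecture --- and this is exactly the gap the paper leaves open by stating it as Conjecture~\ref{conj:org} rather than a theorem.
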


It is known that if such a constant $c$ exists, then $c > 1 / 0.61
\approx 1.64$~\cite{NW95a}. As in the two-party case, in $k$-party
communication complexity still $\log \rank(M_f) \leq D(f)$; the
formal definition of $\rank(M_f)$ appears in Subsection~\ref{ss:k}
(and in Subsection~\ref{ss:3} for the special case $k=3$). This is
true for exactly the same reason as in the two-party case: any
deterministic protocol whose complexity is $c$ induces a partition
of the tensor $M_f$ into $2^c$ monochromatic $k$-boxes. Such boxes
are, in particular, rank-$1$ tensors whose sum is $M_f$. This, in
turn, leads to the following natural generalization of the above
conjecture.

\begin{conjecture}[Log Rank Conjecture for $k$-Party Computation]
\label{conj:k} Let $k$ be a constant. There exists a constant $c'
= c'(k) > 0$, such that for every function $\ffff$ we have that
$D(f) = O(\log^{c'} \rank(M_f))$.
\end{conjecture}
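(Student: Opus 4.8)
\medskip
\noindent\textbf{Proof proposal.}\quad
Conjecture~\ref{conj:k} is, as its name says, a conjecture, so the honest thing to describe is the route by which one would try to derive it from the two–party log‑rank conjecture (Conjecture~\ref{conj:org}, which we take as a hypothesis here), and to locate the step that does not close. The plan rests on two purely multilinear facts relating $\rank(M_f)$ to the flattening ranks $\rank(M_{f^{A,B}})$. (A) For every partition $(A,B)$ of $[k]$ one has $\rank(M_{f^{A,B}})\le\rank(M_f)$: the matrix $M_{f^{A,B}}$ is obtained from the tensor $M_f$ by reshaping, and the reshape of a sum of $s$ rank‑one $k$‑tensors is a sum of $s$ rank‑one matrices. (B) Conversely, for constant $k$ one has $\rank(M_f)\le R^{\,k-1}$ where $R=\max_{i\in[k]}\rank\big(M_{f^{\{i\},[k]\setminus\{i\}}}\big)$; this is the classical Tucker‑type bound that contracts all but one mode of an $r_1\times\dots\times r_k$ core. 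Together, writing $L_i:=\log\rank\big(M_{f^{\{i\},[k]\setminus\{i\}}}\big)$, these give $\max_i L_i\ \le\ \log\rank(M_f)\ \le\ (k-1)\cdot\max_i L_i$; that is, up to a factor $k-1$ in the exponent, the rank lower bound for $f$ equals the best rank lower bound taken over the family of singleton‑vs‑rest partitions, which is pair‑separating.

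Granting Conjecture~\ref{conj:org}, fact (B) reduces Conjecture~\ref{conj:k} to the following statement, call it $(\star)$: there is a constant $c''=c''(k)$ such that $D(f)=O\big(\max_{i}D(f^{\{i\},[k]\setminus\{i\}})\big)^{c''}$ for every $k$‑argument function $f$ (more generally, $D(f)$ is polynomially bounded in $\max_{(A,B)\in\mathcal P}D(f^{A,B})$ for any pair‑separating family $\mathcal P$). Indeed, $(\star)$ together with Conjecture~\ref{conj:org} gives $D(f)=\poly\big(\max_i\log^{c}\rank(M_{f^{\{i\},[k]\setminus\{i\}}})\big)=\poly\big((k-1)\log\rank(M_f)\big)$ via fact (B), which is exactly the conclusion of Conjecture~\ref{conj:k}; the converse implication is immediate from the $k$‑party rank lower bound $D(f)\ge\log\rank(M_f)$ and fact (A). So all the weight of Conjecture~\ref{conj:k} sits in establishing $(\star)$ — precisely the deterministic universality of partition arguments from Question~\ref{o:gapk} — and $(\star)$ is exactly what Sections~\ref{ss:3} and~\ref{ss:k} analyze.

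The hard part — and, I expect, the genuine obstacle — is $(\star)$ itself. The naive attempt is induction on $k$: fix the partition $(\{1\},\{2,\dots,k\})$, take the cheap two‑party protocol for $f^{\{1\},[k]\setminus\{1\}}$ guaranteed by Conjecture~\ref{conj:org} and fact (A), let $P_1$ play Alice (he holds $x_1$), and have the coalition $\{P_2,\dots,P_k\}$ simulate Bob. But each of Bob's messages is a function of $(x_2,\dots,x_k)$, hence a $(k-1)$‑party problem among $P_2,\dots,P_k$, and there is no bound on the \emph{tensor} rank of these message functions in terms of $\rank(M_f)$ — low flattening rank does not make the ``live row set'' of a protocol leaf small — so the induction hypothesis cannot be applied and the recursion collapses. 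Turning the combinatorial‑rectangle structure that Conjecture~\ref{conj:org} supplies for each induced matrix into a \emph{joint} monochromatic box structure for the tensor $M_f$ is exactly the role of the multilinear arguments of Sections~\ref{ss:3}–\ref{ss:k}; but, as carried out there, that step consumes the generalized log‑rank hypothesis as an input rather than producing it, and (as Subsections~\ref{ssec:nd} and~\ref{ssec:rnd} show) the analogue of $(\star)$ is outright false for nondeterministic and randomized complexity, so any proof must use determinism in an essential way. The upshot is that the argument above is best read as a web of equivalences — Conjecture~\ref{conj:k} is, modulo polynomial losses, the conjunction of Conjecture~\ref{conj:org} with the deterministic universality of partition arguments — and an unconditional proof would require new ideas precisely at the point where the two‑party log‑rank conjecture is itself stuck.
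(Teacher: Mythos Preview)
The statement is a conjecture, and the paper does not prove it; it is introduced as a hypothesis under which the main deterministic results (Theorems~\ref{t:logRankPartition3party} and~\ref{t:logRankPartition}) are derived. You correctly recognise this, and your discussion of the equivalence web --- Conjecture~\ref{conj:k} is, up to polynomial losses, the conjunction of Conjecture~\ref{conj:org} with the universality statement $(\star)$ --- is exactly the content of the Remark following Theorem~\ref{t:logRankPartition3party} (the implications among (S1), (S2), (S3)). Your fact~(A) is the easy ``flattening does not increase rank'' observation, and your fact~(B) is Lemma~\ref{c:cuberank} / Proposition~\ref{prop:RankBoundSimple}.

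One small slip worth fixing: you have the two facts swapped in the two directions. To go from $(\star)+$Conjecture~\ref{conj:org} to Conjecture~\ref{conj:k} you need fact~(A), i.e.\ $\rank(M_{f^{A,B}})\le\rank(M_f)$, so that $D(f)\le\poly(\max_i D(f^i))\le\poly(\max_i\log\rank(M_{f^i}))\le\poly(\log\rank(M_f))$; this is precisely the paper's argument that (S1)$+$(S2)$\Rightarrow$(S3). Conversely, deriving $(\star)$ from Conjecture~\ref{conj:k} needs fact~(B), since one must bound $\log\rank(M_f)$ above by $\sum_i\log\rank(M_{f^i})\le\sum_i D(f^i)$; this is how the paper proves Theorem~\ref{t:logRankPartition3party}. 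Apart from this attribution swap, your analysis and the paper's coincide.
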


Computationally, even tensor rank in three dimensions is very
different than rank in two dimensions. While the former is
NP-Complete (see~\cite{Hastad90}), the latter can be computed very
efficiently using Gaussian elimination. However, in the
(combinatorial) context of communication complexity, much of the
properties are the same in two and three dimensions\remove{(see,
e.g., Lemma~\ref{c:Kron} in~Appendix~\ref{a:fs})}. We will show
below \remove{(in Appendix~\ref{a:further})} that, assuming
Conjecture~\ref{conj:k} is correct, the answer to
Question~\ref{o:gapk} is that the partition argument technique
always produces a bound that are polynomially related to the true
bound.
%% Furthermore, we characterize the collections of
%% partitions one has to analyze in order to understand the
%% complexity of the original $k$-argument function, assuming
%% Conjecture~\ref{conj:k} holds.

We start with the case $k=3$ whose proof is similar in nature to
the general case but is somewhat simpler and avoids the tensor
notation.

\subsection{The Three-Party Case}
\label{ss:3}
We start with the definition of a rank of three dimensinal matrices,
known as \emph{tensor rank}. In what follows $\F$ is any field.

\begin{definition}[Rank of a Three Dimensional Matrix]
A three dimensional matrix $M \in \F^{m \times m \times m}$ is of
rank $1$ if there exist three non-zero vectors $v,u,w \in \F^{m}$
such that, for every $x,y,z \in [m]$, we have that $M[x,y,z] =
v[x]u[y]w[z]$. In this case
we write $M = v \otimes u \otimes w$. A matrix $M\in \F^{m \times
m \times m}$ is of rank $r$ if it can be represented as a sum of
$r$ rank $1$ matrices (i.e., for some rank-1 three-dimensional
matrices $M_1,\ldots,M_r \in \F^{m \times m \times m}$ we have
$M=M_1+\ldots+M_r$), but cannot be represented as the sum of
$r-1$ rank $1$ matrices.
\end{definition}

The next theorem states that, assuming the log-rank conjecture for
$3$-party protocols, partition arguments are universal.
Furthermore, it is enough to study the communication complexity of
\emph{any two} of the three induced functions, in order to
understand the communication complexity of the original function.
We will use the notation $f^1 := f^{\set{1},\set{2,3}}$, $f^2 :=
f^{\set{2},\set{1,3}}$, and $f^3 := f^{\set{3},\set{1,2}}$.

\begin{theorem} \label{t:logRankPartition3party}
Let $f: \bit^n \times \bit^n \times \bit^n \ra \bit$ be a Boolean
function. Consider any two induced functions of $f$, say
$f^1,f^2$, and assume that Conjecture~\ref{conj:k} holds with
a constant $c'>0$. Then $D(f) = O((D(f^1)+D(f^2))^{c'})$.
\end{theorem}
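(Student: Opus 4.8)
The plan is to bound the tensor rank $\rank(M_f)$ in terms of the matrix ranks $\rank(M_{f^1})$ and $\rank(M_{f^2})$, and then invoke Conjecture~\ref{conj:k} together with the log-rank lower bound to close the loop. Concretely: by the matrix version of the rank lower bound, $\log\rank(M_{f^i}) \le D(f^i)$ for $i=1,2$, so it suffices to show $\rank(M_f) \le \rank(M_{f^1})\cdot\rank(M_{f^2})$ (or some fixed polynomial combination), since then $\log\rank(M_f) \le \log\rank(M_{f^1}) + \log\rank(M_{f^2}) \le D(f^1)+D(f^2)$, and Conjecture~\ref{conj:k} gives $D(f) = O(\log^{c'}\rank(M_f)) = O((D(f^1)+D(f^2))^{c'})$.

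For the key rank inequality, first I would decompose $M_{f^2}$ as a two-dimensional matrix: writing it with rows indexed by the coordinate in slot $2$ and columns indexed by the pair of coordinates in slots $1,3$, we get a rank decomposition $M_{f^2} = \sum_{j=1}^{r_2} u_j \otimes w_j$ where $r_2 = \rank(M_{f^2})$, each $u_j \in \F^{2^n}$ is indexed by slot $2$, and each $w_j \in \F^{2^n \times 2^n}$ is indexed by slots $1$ and $3$. Re-reading this decomposition as a statement about the three-dimensional tensor, $M_f[x,y,z] = \sum_j u_j[y]\, w_j[x,z]$, i.e. $M_f = \sum_{j=1}^{r_2} N_j \otimes_{(2)} u_j$, where each $N_j$ is the $2^n \times 2^n$ matrix $w_j$ placed in slots $1,3$. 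Now I would further decompose each $N_j$ as a matrix (rows = slot $1$, columns = slot $3$) into $\rank(N_j)$ rank-one matrices; substituting back expresses $M_f$ as a sum of $\sum_j \rank(N_j)$ rank-one tensors, so $\rank(M_f) \le \sum_{j=1}^{r_2}\rank(N_j)$. The remaining point is to bound each $\rank(N_j)$: since the $N_j$ arise from a rank decomposition of $M_{f^2}$, their column spans (as matrices in slots $1,3$) all lie inside the row space of $M_{f^2}$ viewed appropriately — more usefully, the matrix $M_{f^1}$ (rows = slot $1$, columns = slots $2,3$) controls slot $1$: every slice $M_f[\,\cdot\,,y,\,\cdot\,]$, hence every $N_j$ (which is a linear combination of such slices), has column space contained in a fixed space of dimension $\rank(M_{f^1})$. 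Therefore $\rank(N_j) \le \rank(M_{f^1})$ for all $j$, giving $\rank(M_f) \le \rank(M_{f^1})\cdot\rank(M_{f^2})$.

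The step I expect to be the main obstacle is the last one — arguing cleanly that each $N_j$ has rank at most $\rank(M_{f^1})$. The subtlety is that the $N_j$ from an arbitrary rank decomposition of $M_{f^2}$ need not individually be slices of $M_f$; they are only guaranteed to be linear combinations of slices once we also know the $u_j$ are, say, linearly independent (which we may assume in a minimal decomposition). I would handle this by taking a minimal rank decomposition of $M_{f^2}$ so that $\{u_j\}$ is linearly independent, extend to a basis, and express each $N_j$ as a linear combination $\sum_y \alpha_{j,y} M_f[\,\cdot\,,y,\,\cdot\,]$ of the slices; since the column space (slot $1$) of every slice sits inside the column space of $M_{f^1}$, which has dimension $\rank(M_{f^1})$, the same holds for each $N_j$, and hence $\rank(N_j) \le \rank(M_{f^1})$. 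Putting the pieces together yields $\log\rank(M_f) \le D(f^1) + D(f^2)$ and the theorem follows. (The same argument works for any two of the three induced functions by symmetry, since slots $1$ and $2$ played no special role.)
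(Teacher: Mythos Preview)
Your proposal is correct and follows essentially the same approach as the paper: prove the rank inequality $\rank(M_f)\le \rank(M_{f^1})\cdot\rank(M_{f^2})$, combine with the log-rank lower bound $\log\rank(M_{f^i})\le D(f^i)$, and invoke Conjecture~\ref{conj:k}. The only cosmetic difference is in the proof of the rank inequality: the paper fixes a $z$-slice $A_z[x,y]=M_f[x,y,z]$ and writes $A_z=VQ_zU$ using bases for the column spaces of $M_{f^1}$ and $M_{f^2}$ simultaneously, whereas you first take a rank-$r_2$ decomposition of $M_{f^2}$ and then bound each resulting $N_j$ by $r_1$ via the column space of $M_{f^1}$; both arguments are standard and equivalent.
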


Towards proving Theorem~\ref{t:logRankPartition3party}, we analyze
the connection between the rank of a three-dimensional matrix $M
\in \F^{m \times m \times m}$ and some related two-dimensional
matrices. More specifically, given $M$, consider the following
two-dimensional matrices $M_1, M_2, M_3 \in \F^{m \times m^2}$,
which we call the \emph{induced matrices} of $M$:
$$M_1[x, \pair{y,z}] = M[x,y,z], \ \ \ M_2[y, \pair{x,z}] = M[x,y,z], \ \ \ M_3[z, \pair{x,y}] = M[x,y,z]$$
We show that if $M$ has ``large'' rank, then at least two
of its induced matrices have large rank, as well
\footnote{It is possible to have
\emph{one} induced matrix with small rank. For example, if $M$ is
defined so that $M[x,y,z] = 1$ if $y=z$ and $M[x,y,z] = 0$
otherwise, then $M$ has rank $m$ while its induced matrix $M_1$ is
of rank $1$.}.

\begin{lemma}
\label{c:cuberank} Let $r_1 = \rank(M_1)$ and $r_2 = \rank(M_2)$.
Then $\rank(M) \leq r_1r_2$.
\end{lemma}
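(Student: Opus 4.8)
The plan is to write $M$ as a sum of $r_1 r_2$ rank-one tensors explicitly, which immediately gives $\rank(M) \le r_1 r_2$. The starting point is the factorization coming from $M_1$: since $\rank(M_1) = r_1$, we can write $M_1 = \sum_{a=1}^{r_1} v_a \otimes g_a$ where $v_a \in \F^m$ and $g_a \in \F^{m^2}$. Reading the column index of $M_1$ as a pair $\pair{y,z}$, the vector $g_a$ is itself an $m \times m$ matrix; call it $G_a$, so that $M[x,y,z] = \sum_{a=1}^{r_1} v_a[x]\, G_a[y,z]$. The idea is now to further decompose each $G_a$ into rank-one pieces $G_a = \sum_b u_{a,b} \otimes w_{a,b}$; plugging this in yields $M[x,y,z] = \sum_{a,b} v_a[x]\, u_{a,b}[y]\, w_{a,b}[z]$, a sum of rank-one tensors, and the count of terms bounds $\rank(M)$.

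The point where one must be slightly careful — and the step I expect to be the main (minor) obstacle — is controlling the total number of terms $\sum_a \rank(G_a)$. A priori each $G_a$ could have rank up to $m$, giving only $\rank(M) \le r_1 m$, which is not what we want. The fix is to exploit $M_2$: I claim that the sum of the \emph{row spaces} of $G_1, \dots, G_{r_1}$ (viewed as subspaces of $\F^m$ indexed by the $z$-coordinate, or equivalently the $y$-coordinate) has dimension at most $r_2 = \rank(M_2)$. Indeed, fixing $x$, the matrix $M[x,\cdot,\cdot] = \sum_a v_a[x] G_a$ is a slice; but more to the point, $M_2$ has rows indexed by $y$, and its row space is spanned by the vectors $z \mapsto M[x,y,z] = \sum_a v_a[x] G_a[y,z]$ over all $x,y$. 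Hence every vector $G_a[y,\cdot]$ lies in $\row(M_2)$ once we note the $v_a$ can be taken linearly independent (they can, since $\rank(M_1)=r_1$), so by inverting the resulting linear system over a suitable choice of $x$'s each $G_a[y,\cdot]$ is a linear combination of rows of $M_2$. Therefore the span of all rows of all the $G_a$'s sits inside a space of dimension $\le r_2$.

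Given that, pick a basis $w_1, \dots, w_{r_2}$ of that common space; each $G_a$ then has all its rows in $\mathrm{span}(w_1,\dots,w_{r_2})$, so $G_a = \sum_{b=1}^{r_2} u_{a,b} \otimes w_b$ for suitable column vectors $u_{a,b} \in \F^m$ (with $u_{a,b}[y]$ the $b$-th coordinate of $G_a[y,\cdot]$ in the $w$-basis). Substituting back,
$$M[x,y,z] = \sum_{a=1}^{r_1}\sum_{b=1}^{r_2} v_a[x]\, u_{a,b}[y]\, w_b[z],$$
which exhibits $M$ as a sum of $r_1 r_2$ rank-one tensors $v_a \otimes u_{a,b} \otimes w_b$. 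By definition of tensor rank this gives $\rank(M) \le r_1 r_2$, completing the proof. (If some of the tensors in this list are zero or the $u_{a,b}$ are not all nonzero, this only decreases the number of genuine rank-one summands, so the bound still holds; one discards the zero terms.)
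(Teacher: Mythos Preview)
Your overall strategy is sound and very close to the paper's, but the key claim in the middle is false as stated. You assert that the sum of the \emph{row} spaces of the $G_a$ (vectors $G_a[y,\cdot]$ indexed by $z$) has dimension at most $r_2=\rank(M_2)$. This does not typecheck: $\row(M_2)$ lives in $\F^{m^2}$ (rows of $M_2$ are indexed by $\pair{x,z}$), so ``$G_a[y,\cdot]\in\row(M_2)$'' is not meaningful. More importantly, the intended claim is actually false. Take $M[x,y,z]=\delta_{x,z}$. Then $r_1=m$, $r_2=1$; choosing $v_a=e_a$ gives $G_a[y,z]=\delta_{a,z}$, so every row of $G_a$ equals $e_a$, and the sum of the row spaces over $a=1,\dots,m$ is all of $\F^m$, of dimension $m\gg r_2=1$.

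What is true --- and what your inversion argument actually shows --- is that the \emph{columns} $G_a[\cdot,z]$ (vectors indexed by $y$) lie in the column space of $M_2$, which has dimension $r_2$: inverting over suitable $x$'s gives $G_a[\cdot,z]$ as a linear combination of the columns $y\mapsto M[x_b,y,z]$ of $M_2$. Picking a basis $u_1,\dots,u_{r_2}$ for that column space yields $G_a=\sum_{b=1}^{r_2} u_b\otimes w_{a,b}$ with the $u_b$ common and the $w_{a,b}$ varying, hence
\[
M=\sum_{a=1}^{r_1}\sum_{b=1}^{r_2} v_a\otimes u_b\otimes w_{a,b},
\]
which is exactly the decomposition in the paper's proof. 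So swapping ``rows'' for ``columns'' (equivalently, letting the $z$-factor vary with $(a,b)$ rather than the $y$-factor) fixes the gap and recovers the paper's argument verbatim.
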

\begin{proof}
Let $v_1, \dots, v_{r_1} \in \F^n$ be a basis for the column space
of $M_1$. Let $u_1, \dots, u_{r_2}\in \F^n$ be a basis for the
column space of $M_2$. We claim that there are $r_1r_2$ vectors
$w_{1,1}, \dots, w_{r_1,r_2}$ such that $M =
\sum_{i=1}^{r_1}\sum_{j=1}^{r_2}v_i \otimes u_j \otimes w_{i,j}$.
This would imply that $\rank(M) \leq r_1r_2$, as required.

Fix $z \in [m]$ and consider the matrix $A_z\in \F^{m \times m}$ defined
by $A_z[x,y] = M[x,y,z]$. Observe that the columns of the matrix $A_z$
belong to the set of columns of the matrix $M_1$ (note that along each
column of $A_z$ only the $x$ coordinate changes, exactly as is the case
along the columns of the matrix $M_1$). Therefore, the columns of $A_z$
are contained in the span of $v_1, \dots v_{r_1}$. Similarly, the rows
of the matrix $A_z$ belong to the set of columns of the matrix $M_2$
(in each row of $A_z$, the value $x$ is fixed and $y$ is changed as is
the case along the columns of the matrix $M_2$) and are thus contained
in the span of vectors $u_1, \dots u_{r_2}$

Let $V \in \F^{m \times r_1}$ be the matrix whose columns are the
vectors $v_1, \dots v_{r_1}$. Similarly, let $U \in \F^{r_2
\times m}$ be the matrix whose rows are $u_1, \dots, u_{r_2}$. The
above arguments show that there exists a matrix $Q'_z \in \F^{m
\times r_2}$ such that $A_z = Q'_zU$ and $\rank(Q'_z) =
\rank(A_z)$. This is since the row space of $A_z$ contained in
the row space of $U$, and since the rows of $U$ are independent. Hence
the column space of the matrix $Q'_z$ is identical to the column
space of the matrix $A_z$, and so it is contained in the column space
of $V$. Therefore, there exists a matrix $Q_z
\in \F^{r_1 \times r_2}$ such that $Q'_z = VQ_z$. Altogether, we
get that $A_z = VQ_zU$. Simple linear algebraic manipulations show
that this means that $A_z = \sum_{i=1}^{r_1} \sum_{j=1}^{r_2}
Q_z[i,j]v_i \otimes u_j .$

Now, for every $i \in [r_1]$ and $j \in [r_2]$, define $w_{i,j}
\in \F^n$ such that, for every $z \in [m]$, we have that
$w_{i,j}[z] = Q_z[i,j]$. Then $M =
\sum_{i=1}^{r_1}\sum_{j=1}^{r_2}v_i \otimes u_j \otimes w_{i,j}.$
\end{proof}

\remove{The proofs of Theorem~\ref{t:logRankPartition3party} and
Corollary~\ref{cor:3}, based on the above Lemma, appear in
Appendix~\ref{a:3}.
}

\begin{proof} {\bf (of Theorem~\ref{t:logRankPartition3party})\ \ }
By the rank lower bound, $\log \rank (M_{f^1}) \leq D(f^1)$ and
$\log \rank (M_{f^2}) \leq D(f^2)$. By Lemma~\ref{c:cuberank},
$\rank(M_f) \le \rank(M_{f^1})\rank(M_{f^2})$. Therefore,
$$\log \rank(M_f) \le
\log \rank (M_{f^1}) + \log \rank (M_{f^2}) \leq D(f^1)+D(f^2).$$
Finally, assuming Conjecture~\ref{conj:k}, we get $D(f) =
O(\log^{c'} \rank(M_f))= O((D(f^1)+D(f^2))^{c'})$.
\end{proof}

\remove{
\begin{corollary}
\label{cor:3} Assume Conjecture~\ref{conj:org} is true. Then
Conjecture~\ref{conj:k} for $k=3$ is true if and only if partition
arguments are universal.
\end{corollary}

\begin{proof}
The first direction is given by
Theorem~\ref{t:logRankPartition3party}. For the reverse direction,
assume partition arguments are universal, and let $\fff$ be a
$3$-argument function. It is not hard to show that for all $i\in
[3]$, the rank of the matrix representing $f^i$ is at most the
rank of the matrix representing $f$, that is, $\rank(M_i) \leq
\rank(M_f)$. Now, by Conjecture~\ref{conj:org}, each induced
function $f^i$ of $f$ has communication complexity polynomial in
$\log \rank(M_i)$, and thus polynomial in $\log \rank(M_f)$. Since
we assume that partition arguments are universal, this implies
that the communication complexity of $f$ is polynomial in the
communication complexity of its induced functions, and thus is
polynomial in $\log \rank(M_f)$ as well, proving
Conjecture~\ref{conj:k} for $k=3$.
\end{proof}
}

%%% above there is an older corrolary with the same point.

\medskip

\begin{remark}
It is interesting to further explore the relations between the
following three statements:\\ (S1) partition arguments are
universal;\\ (S2) the (standard) 2-dimensional log rank conjecture
(Conjecture~\ref{conj:org}) holds; and \\ (S3) the 3-dimensional
rank conjecture holds.\\
%%% i.e. the case $k=3$ of ~\ref{conj:k})
Theorem~\ref{t:logRankPartition3party} shows that (S3) implies
(S1) and, trivially, (S3) implies (S2). We argue below, that (S1)
together with (S2) imply (S3). This implies that, assuming (S1),
the two versions of the rank conjecture, i.e. (S2) and (S3), are
equivalent. Similarly, it implies that, assuming (S2),
universality of partition arguments (S1) and the 3-dimensional rank
conjecture (S3) are equivalent.
It remains open whether the equivalence between the two
conjectures (S2) and (S3) can be proved, without making any
assumption.\\
To see that (S1) together with (S2) imply (S3), consider an
arbitrary 3-argument function $f$ of rank $r=\rank(M_f)$. Recall
that $f^1,f^2$ and $f^3$ denote the three induced functions of
$f$. It follows that, for $i\in[3]$, the (standard,
two-dimensional) rank of the matrix representing $f^i$ is bounded
by $r$. By (S2), for some constant $c$, we have $D(f^i) = O(\log^c
r)$, for $i\in[3]$. By (S1), for some constant $c'$, we have $D(f)
\le (\max\{D(f^1),D(f^2),D(f^3)\})^{c'} = O(\log^{c\cdot c'} r)$,
as needed.
\end{remark}

\subsection{The $k$-Party Case}
\label{ss:k}

We start with some mathematical background.

\myparagraph{Tensors, Flattening, Pairing, and Rank.}

Let $V_1,\ldots,V_k$ be vector spaces over the same field $\calF$; all
tensor products are understood to be over that field. For any subset
$I$ of $[k]$ write $V_I:=\bigotimes_{i \in I} V_i$. An element $T$
of $V_{[k]}$ is called a {\em $k$-tensor}, and can be written as a sum
of {\em pure} tensors $v_1 \otimes \cdots \otimes v_k$ where $v_i \in
V_i$. The minimal number of pure tensors in such an expression for $T$
is called the {\em rank} of $T$. Hence pure tensors have rank $1$.

If each $V_i$ is some $\calF^{n_i}$, then an element of the tensor
product can be thought of as a $k$-dimensional array of numbers
from $\calF$, of size $n_1 \times \cdots \times n_k$. A rank-1
tensor is an array whose $(j_1,\ldots,j_k)$-entry is the product
$a_{1,j_1} \cdots a_{k,j_k}$ where $(a_{i,j})_j$ is an element of
$\calF^{n_i}$.

%%%Enav - maybe add an explanation what tensors are...

For any partition $\{I_1,\ldots,I_m\}$ of $[k]$, we can view $T$
as an element of $\bigotimes_{l \in [m]} (V_{I_l})$; this is
called the {\em flattening} $\flat_{I_1,\ldots,I_m} T$ of $T$ or
just an $m$-flattening of $T$. It is the same tensor---or more
precisely, its image under a canonical isomorphism---but the
notion of rank changes: the rank of this $m$-flattening is the
rank of $T$ considered as an $m$-tensor in the space
$\bigotimes_{l \in [m]} U_l$, where $U_l$ happens to be the space
$V_{I_l}$.

If one views a $k$-tensor as a $k$-dimensional array of numbers, then
an $m$-flattening is an $m$-dimensional array. For instance, if $k=3$
and $n_1=2,n_2=3,n_3=5$, then the partition $\{\{1,2\},\{3\}\}$ gives
rise to the flattening where the $2 \times 3 \times
5$-array $T$ is
turned into a $6 \times 5$-matrix.

Another operation that we will use is {\em pairing}.  For a vector
space $U$, denote by $U^*$ the dual space of functions $\phi:U \ra
\calF$ that are $\calF$-linear, i.e., that satisfy
$\phi(u+v)=\phi(u)+\phi(v)$ and $\phi(cu)=c\phi(u)$ for all $u,v
\in U$ and $c \in \calF$.  Let $I$ be a subset of $[k]$, let
$\xi=\otimes_{i \in I} \xi_i \in \bigotimes_{i \in I}(V_i^*)$ be a
pure tensor, and let $T=\otimes_{i \in [k]} v_i \in V_{[k]}$ be a
pure tensor. Then the {\em pairing} $\lan T,\xi \ran \in V_{[k]
\setminus I}$ is defined as $ \lan T,\xi \ran= c \cdot \otimes_{i
\in [k] \setminus I} v_i ;$ where $c\in \calF$ is defined as $c :=
\left(\prod_{i \in I} \xi_i(v_i) \right)$. The pairing is
extended bilinearly in $\xi$ and $T$ to general tensors. Note that
$\xi$ induces a natural linear map $V_{[k]} \rightarrow V_{[k]
\setminus I}$, sending $T$ to the pairing $\lan T,\xi \ran$.

If one views a $k$-tensor as a $k$-dimensional array of numbers,
then pairing also reduces the dimension of the array. For instance,
pairing a tensor $T\in \calF^2 \otimes \calF^3 \otimes \calF^5$ with a
vector in the dual of the last factor $\calF^5$ gives a
linear combination of the five $2 \times 3$-matrices of which $T$
consists. Pairing with pure tensors corresponds to a repeated pairing
with dual vectors in individual factors.

Here are some elementary facts about tensors, rank,
flattening, and pairing:
\begin{description}
\item[Submultiplicativity] if $T$ is a $k$-tensor in
$\bigotimes_{i \in [k]} V_i$ and $S$ is an $l$-tensor in
$\bigotimes_{j \in [l]} W_j$, then the rank of the $(k+l)$-tensor
$T \otimes S$ is at most the product of the ranks of $T$ and $S$.
\item[Subadditivity] if $T_1,T_2$ are $k$-tensors in

$\bigotimes_{i \in [k]}V_i$, then the rank of the $k$-tensor
$T_1+T_2$ is at most the sum of the ranks of $T_1$ and $T_2$.
\item[Pairing with pure tensors does not increase rank] if $T \in
V_{[k]}$ and $\xi=\otimes_{i \in I}\xi_i$ then the rank of $\lan
T,\xi \ran$ is at most that of $T$. \item[Linear independence for
$2$-tensors] if a $2$-tensor $T$ in $V_1 \otimes V_2$ has rank
$d$, then in any expression $\sum_{p=1}^d R_p \otimes S_p=T$ with
$R_p \in V_1$ and $S_p \in V_2$ the set $\{S_1,\ldots,S_d\}$ is
linearly independent, and so is the set $\{R_1,\ldots,R_d\}$.
\end{description}

To state our theorem, we need the following definition.
\begin{definition}[Separating Collection of Partitions]
Let $k$ be a positive integer. Let $\mC$ be a collection of
partitions $\{I,J\}$ of $[k]=\{1,\ldots,k\}$ into two non-empty
parts. We say that $\mC$ is {\em separating} if, for every $i,j
\in [k]$ such that $i \neq j$, there exists a partition $\{I,J\}
\in \mC$ with $i \in I$ and $j \in J$.
\end{definition}

\begin{theorem} \label{t:logRankPartition}
Let $\ffff$ be a Boolean function. Let $\mC$ be a separating
collection of partitions of $[k]$ and assume that
Conjecture~\ref{conj:k} holds with a constant $c'>0$. Then
\[ D(f) = O(\ (2(k-1) \max_{\{I,J\} \in \mC}D(f^{I,J})\
)^{c'}\ ). \]
\end{theorem}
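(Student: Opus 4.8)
The plan is to reduce Theorem~\ref{t:logRankPartition} to Conjecture~\ref{conj:k} by proving the tensor-rank inequality
\[
\rank(M_f) \le \prod_{\{I,J\}\in\mC} \rank(M_{f^{I,J}})^{?}
\]
— more precisely, a bound showing that $\log\rank(M_f)$ is at most $2(k-1)\max_{\{I,J\}\in\mC}\log\rank(M_{f^{I,J}})$ up to the constant absorbed in the $O(\cdot)$. Once such a bound is in hand, the theorem follows exactly as in the three-party case: by the rank lower bound $\log\rank(M_{f^{I,J}})\le D(f^{I,J})$, so $\log\rank(M_f) = O\bigl((k-1)\max_{\{I,J\}\in\mC}D(f^{I,J})\bigr)$, and then Conjecture~\ref{conj:k} gives $D(f)=O(\log^{c'}\rank(M_f))$, which is the claimed bound. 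So the entire content is the multilinear-algebra statement generalizing Lemma~\ref{c:cuberank}.

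\textbf{The key combinatorial step.} First I would argue that it suffices to handle a particularly simple kind of separating collection. A separating collection $\mC$ contains, for every pair $i\neq j$, some partition putting $i$ and $j$ on opposite sides. I would extract from $\mC$ a subcollection that still separates all pairs but has a tree-like structure: specifically, I claim one can find at most $k-1$ partitions in $\mC$ (or obtained by ``coarsening'' members of $\mC$) whose common refinement is the partition of $[k]$ into singletons. The natural way is to build this greedily — start with the trivial partition $\{[k]\}$, and repeatedly intersect with a member of $\mC$ that strictly refines the current partition; since each step increases the number of blocks by at least one and we start at $1$ and end at $k$, at most $k-1$ steps suffice. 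The factor $2(k-1)$ in the statement strongly suggests this: each of the $k-1$ chosen partitions contributes a factor of $2$ in the exponent because flattening with respect to a two-block partition $\{I,J\}$ lets us write $M_f$ as a combination of $r_{I}\cdot r_{J}\le r^2$ pure pieces where $r = \rank(M_{f^{I,J}})$, and $\log(r^2)=2\log r$.

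\textbf{The core inductive argument.} The heart is an iterated version of Lemma~\ref{c:cuberank}, phrased in the tensor language of Section~\ref{ss:k}. Suppose $P=\{B_1,\dots,B_m\}$ is a partition of $[k]$ and we know $M_f$, viewed as an $m$-tensor via $\flat_P$, together with bases for the relevant ``slice spaces''. Refining one block $B_\ell$ by a member $\{I,J\}\in\mC$ (so $B_\ell$ splits into $B_\ell\cap I$ and $B_\ell\cap J$), I want to show the rank of the refined flattening is at most the old rank times $\rank(M_{f^{I,J}})$. This is the step where I expect the main obstacle. In Lemma~\ref{c:cuberank} the argument used that every ``$z$-slice'' $A_z$ has its columns among the columns of $M_1$ and its rows among the columns of $M_2$; the generalization requires that when we freeze all coordinates outside $B_\ell$ and look at the resulting tensor supported on $B_\ell$, its flattening along $I\cap B_\ell$ versus $J\cap B_\ell$ has rank bounded by $\rank(M_{f^{I,J}})$ — because the $I$-fibers of $M_f$ (freezing the $J$-coordinates) are a subset of the columns of $M_{f^{I,J}}$, hence live in an $r_{I,J}$-dimensional space, and likewise for $J$. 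One then writes, as in the proof of Lemma~\ref{c:cuberank}, $A = V Q U$ with $Q$ of size $r_I\times r_J$, and reassembles, using ``pairing with pure tensors does not increase rank'' and submultiplicativity/subadditivity, to express $M_f$ as a sum of (old rank)$\times r_{I}\times r_{J}$ pure tensors over the finer partition. Iterating $k-1$ times, starting from the $1$-flattening (rank at most $1$ as a ``$1$-tensor'' is automatic — actually we start from a two-block flattening) and ending at the all-singletons flattening whose rank is $\rank(M_f)$, yields
\[
\rank(M_f) \le \prod_{\text{chosen }\{I,J\}} r_I\, r_J \le r^{2(k-1)}, \qquad r:=\max_{\{I,J\}\in\mC}\rank(M_{f^{I,J}}),
\]
and taking logs gives $\log\rank(M_f)\le 2(k-1)\max_{\{I,J\}\in\mC}\log\rank(M_{f^{I,J}})$, completing the reduction.

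\textbf{Where the difficulty lies.} The routine parts are the greedy extraction of $k-1$ partitions and the final invocation of Conjecture~\ref{conj:k}. The delicate point is bookkeeping the bases correctly through the iteration: after refining once, the ``slices'' are no longer indexed by a single external coordinate but by a multi-index ranging over a product of already-chosen bases, so one must verify that the set of fibers along the newly split direction still embeds into the columns of the corresponding two-dimensional induced matrix $M_{f^{I,J}}$ — this uses that $\{I,J\}$ is a genuine two-block partition of all of $[k]$, so freezing everything outside a single new block in particular freezes everything in $J$ (resp.\ $I$), landing us among honest columns of $M_{f^{I,J}}$. Making this precise with the flattening/pairing formalism, rather than ad hoc slice arguments, is the main work; the ``linear independence for $2$-tensors'' fact is what licenses replacing a flattening by its compressed $V Q U$ form without losing rank information.
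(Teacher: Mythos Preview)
Your overall strategy --- bound $\log\rank(M_f)$ by $2(k-1)\max_{\{I,J\}\in\mC}\log\rank(M_{f^{I,J}})$ and then invoke Conjecture~\ref{conj:k} --- is exactly the paper's, and the final deduction is identical. The content is indeed the rank inequality (the paper states it as Theorem~\ref{thm:RankBound}).

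But your proposed proof of that inequality has a genuine gap, and the paper's argument is structurally different. You propose a chain of one-block refinements $P_0=\{[k]\}\prec P_1\prec\cdots\prec P_{k-1}$ and claim that splitting a block $B$ into $B\cap I$ and $B\cap J$ costs a factor of ``$r_I\times r_J$'' via an $A=VQU$ decomposition as in Lemma~\ref{c:cuberank}. In Lemma~\ref{c:cuberank}, $V$ and $U$ come from bases of the column spaces of \emph{two different} flattenings $M_1$ and $M_2$. The analogue in your step would require bases for the space of $(B\cap I)$-fibers and the space of $(B\cap J)$-fibers of $T$; the dimensions of those spaces are the ranks of the flattenings $\flat_{B\cap I,[k]\setminus(B\cap I)}T$ and $\flat_{B\cap J,[k]\setminus(B\cap J)}T$, and those partitions are \emph{not} in $\mC$ in general. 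Your sentence ``the $I$-fibers of $M_f$ \ldots\ are a subset of the columns of $M_{f^{I,J}}$'' confuses $I$-fibers (vectors in $V_I$, which is what $M_{f^{I,J}}$ controls) with $(B\cap I)$-fibers (vectors in $V_{B\cap I}$, which is what the $VQU$ step needs). You do correctly note that each \emph{individual} $B$-slice has $(B\cap I,B\cap J)$-rank at most $r_{I,J}$, but that does not give a \emph{common} $V$ and $U$ across slices, so it does not bound the rank of the refined flattening by (old rank)$\times r_{I,J}^2$. After one step the $V_B$-factors in your decomposition are no longer pure pairings of $T$, and your iteration loses exactly the structure it needs.

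The paper avoids this by recursion rather than linear refinement. It picks one $\{I,J\}\in\mC$, writes $T=\sum_{p=1}^d T_p\otimes S_p'$ where each $T_p=\langle T,\zeta_p\rangle$ is a \emph{pure} pairing (hence inherits \emph{all} flattening-rank bounds from $T$, so the restricted collection $\{\{I\cap I',I\cap J'\}\}$ is separating for $I$ and the induction hypothesis applies to give $\rank T_p\le d_{\max}^{2(|I|-1)}$), and handles the $S_p'$ side by expressing each as a linear combination of $d$ further pure pairings, picking up one extra factor of $d_{\max}$ before recursing on $J$. The arithmetic $d_{\max}\cdot d_{\max}^{2(|I|-1)}\cdot d_{\max}\cdot d_{\max}^{2(|J|-1)}=d_{\max}^{2(k-1)}$ then closes the induction. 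The recursive split into $I$ and $J$ is what lets every factor stay a pure pairing (or a short combination of pure pairings) of the original $T$ --- precisely the bookkeeping you flagged as delicate but did not resolve.
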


For a special separating collection of partitions we can
give the following better bound.

\begin{theorem}
\label{thm:Simple} Let $\ffff$ be a Boolean function. Set
$d_i:=D(f^{\{i\},[k]\setminus\{i\}})$ and assume that
Conjecture~\ref{conj:k} holds with a constant $c'>0$. Then $D(f)
= O((\sum_{i=1}^{k-1}d_i)^{c'})$.
\end{theorem}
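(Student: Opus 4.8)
The plan is to follow the same template as the proof of Theorem~\ref{t:logRankPartition3party}: reduce the statement to a purely tensor-algebraic rank inequality, and then combine it with the two-party rank lower bound and Conjecture~\ref{conj:k}. Writing $M_i := M_{f^{\{i\},[k]\setminus\{i\}}}$ and $r_i := \rank(M_i)$ for $i \in [k-1]$, the key claim is
\[ \rank(M_f) \ \le\ \prod_{i=1}^{k-1} r_i , \]
the $k$-dimensional analogue of Lemma~\ref{c:cuberank} (which is the case $k=3$, $\rank(M)\le r_1 r_2$). Granting this, the rank lower bound gives $\log r_i \le D(f^{\{i\},[k]\setminus\{i\}}) = d_i$, hence $\log \rank(M_f) \le \sum_{i=1}^{k-1}\log r_i \le \sum_{i=1}^{k-1} d_i$, and Conjecture~\ref{conj:k} then yields $D(f) = O(\log^{c'}\rank(M_f)) = O((\sum_{i=1}^{k-1} d_i)^{c'})$. (The trivial case of a constant $f$ is excluded, so that all quantities in this chain are positive.)

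To prove the rank inequality I would argue directly, viewing $M_f$ as an element of $V_{[k]}$ with each $V_i = \F^{2^n}$. For $i \in [k-1]$, note that the $2$-flattening $\flat_{\{i\},[k]\setminus\{i\}} M_f$ is exactly the matrix $M_i$; let $W_i \subseteq V_i$ be its column space, so $\dim W_i = r_i$. Since every matrix lies in the tensor product of its column space with its full row-index space, $M_f$ lies in the coordinate subspace $V_1 \otimes \cdots \otimes W_i \otimes \cdots \otimes V_k$ of $V_{[k]}$. Intersecting over all $i \in [k-1]$ and applying the elementary identity $\bigcap_{i=1}^{k} \big( V_1 \otimes \cdots \otimes U_i \otimes \cdots \otimes V_k \big) = U_1 \otimes \cdots \otimes U_k$ for subspaces $U_i \subseteq V_i$ (here with $U_i = W_i$ for $i \le k-1$ and $U_k = V_k$, so the $k$-th term is all of $V_{[k]}$ and may be freely added to the intersection), I obtain $M_f \in W_1 \otimes \cdots \otimes W_{k-1} \otimes V_k$. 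Picking a basis $u_{i,1},\ldots,u_{i,r_i}$ of each $W_i$ and the standard basis of $V_k$, this subspace is spanned by the pure tensors $u_{1,p_1} \otimes \cdots \otimes u_{k-1,p_{k-1}} \otimes e$ with $p_i \in [r_i]$; collecting, for each fixed tuple $(p_1,\ldots,p_{k-1})$, all of these terms into a single summand $u_{1,p_1} \otimes \cdots \otimes u_{k-1,p_{k-1}} \otimes w_{p_1,\ldots,p_{k-1}}$ with $w_{p_1,\ldots,p_{k-1}} \in V_k$ expresses $M_f$ as a sum of $\prod_{i=1}^{k-1} r_i$ pure tensors. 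This gives the claim. (An alternative is induction on $k$, repeatedly merging two legs into one and invoking the $k=3$ case; I find the subspace-intersection argument cleaner, and it also makes transparent why only $k-1$ of the $k$ induced functions enter.)

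The only step that requires a genuine argument rather than a citation is the coordinate-subspace intersection identity, so that is where I would be most careful; its proof is short. Extend a basis of $U_i$ to a basis of $V_i$ for each $i$; this yields a basis of $V_{[k]}$ consisting of pure tensors of basis vectors, and a tensor lies in $V_1 \otimes \cdots \otimes U_i \otimes \cdots \otimes V_k$ precisely when its expansion in this basis is supported only on basis tensors whose $i$-th leg was chosen from the basis of $U_i$. Intersecting over $i$ forces the support onto basis tensors all of whose legs come from the respective bases of the $U_i$, i.e., onto a basis of $U_1 \otimes \cdots \otimes U_k$; the reverse inclusion is immediate. Finally, I would note that Theorem~\ref{thm:Simple} can also be deduced from Theorem~\ref{t:logRankPartition} applied to the (clearly separating) collection $\{\, \{i\}, [k]\setminus\{i\} \,\}_{i=1}^{k-1}$, but only with the weaker bound $O((2(k-1)\max_i d_i)^{c'})$; the present argument is what delivers the sharper $O((\sum_{i=1}^{k-1} d_i)^{c'})$.
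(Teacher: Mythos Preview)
Your proposal is correct and is essentially the same argument the paper gives (via Proposition~\ref{prop:RankBoundSimple}): your column spaces $W_i$ are exactly the paper's pairing subspaces $U_i=\{\langle T,\xi\rangle:\xi\in V_{[k]\setminus\{i\}}^*\}$, and both proofs then observe that $M_f$ lies in $W_1\otimes\cdots\otimes W_{k-1}\otimes V_k$ and bound the rank by the number of pure basis tensors there. The only difference is cosmetic: the paper asserts the containment $T\in(\bigotimes_{i\in[k-1]}U_i)\otimes U_k$ in one line, whereas you spell it out through the coordinate-subspace intersection identity.
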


These results will follow from upper bounds on the rank of $k$-tensors,
given upper bounds on the ranks of the $2$-flattenings corresponding
to $\mC$.

\begin{theorem} \label{thm:RankBound}
Let $V_1,\ldots,V_k$ be finite-dimensional vector spaces and let
$T$ be a tensor in their tensor product $\bigotimes_{i \in [k]}
V_i$. Let $\mC$ be a separating collection of partitions of $[k]$,
and let $\dmax$ be the maximal rank of any $2$-flattening
$\flat_{I,J} T$ with $\{I,J\} \in \mC$. Then $\rank T \leq
\dmax^{2(k-1)}$.
\end{theorem}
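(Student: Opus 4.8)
The plan is to induct on $k$. The base case $k=2$ is trivial: the only partition of $[2]$ into two nonempty parts is $\{\{1\},\{2\}\}$, which must lie in $\mC$ since $\mC$ is separating, and the corresponding $2$-flattening of $T$ is just $T$ itself viewed as a matrix, so $\rank T = \dmax \le \dmax^{2(k-1)} = \dmax^2$. For the inductive step, fix some index, say $k$, and pick a partition $\{I,J\} \in \mC$ that separates $k$ from some other index; by relabeling we may assume $k \in J$. Let $d := \rank(\flat_{I,J} T) \le \dmax$. Writing the $2$-flattening as a sum of $d$ pure $2$-tensors, $\flat_{I,J} T = \sum_{p=1}^d R_p \otimes S_p$ with $R_p \in V_I$ and $S_p \in V_J$, the ``linear independence for $2$-tensors'' fact tells us that $\{S_1,\ldots,S_d\}$ is a linearly independent set in $V_J$. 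Extend it to a basis and let $\{\sigma_1,\ldots,\sigma_d\} \subseteq V_J^*$ be the corresponding dual functionals (so $\sigma_q(S_p) = \delta_{pq}$); each $\sigma_q$ can be taken to be a pure tensor in $\bigotimes_{j \in J}(V_j^*)$ after a further change of basis in each $V_j$, or handled by extending the pairing bilinearly. Then pairing $T$ with $\sigma_q$ (along the factors indexed by $J$) yields a tensor $T_q \in V_I = \bigotimes_{i \in I} V_i$, and by construction $T = \sum_{q=1}^d T_q \otimes S_q$ — this is the key reconstruction step, recovering $T$ from its ``slices.'' Moreover ``pairing with pure tensors does not increase rank'' gives $\rank(T_q) \le \rank(T)$, but more importantly each $T_q$ is an $|I|$-tensor with $|I| < k$.

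Now I want to apply the induction hypothesis to each $T_q$. For that I need a separating collection of partitions of $I$ whose associated $2$-flattenings of $T_q$ have rank at most $\dmax$. The natural candidate is $\mC_I := \{\{A \cap I, B \cap I\} : \{A,B\} \in \mC,\ A \cap I \ne \emptyset \ne B \cap I\}$ (restricting each partition of $[k]$ to $I$, discarding the degenerate ones). This is separating for $I$: given $i \ne i'$ in $I$, there is $\{A,B\} \in \mC$ with $i \in A$, $i' \in B$, and then $\{A\cap I, B\cap I\}$ separates $i$ from $i'$ in $I$. The remaining point is the rank bound: for $\{A\cap I, B\cap I\} \in \mC_I$, the $2$-flattening $\flat_{A\cap I,\, B \cap I} T_q$ should have rank $\le \dmax$. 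This follows because $T_q$ is obtained from $T$ by pairing with a pure tensor supported on $J$-factors, and pairing commutes with flattening in the sense that $\flat_{A\cap I, B\cap I} T_q$ is a pairing of $\flat_{A \cup (J\setminus ?), \ldots}$ — more carefully: $\flat_{A \cap I,\, B \cap I} T_q = \langle \flat_{(A\cap I)\cup(\text{part of }J),\, (B\cap I)\cup(\text{part of }J)} T,\ \sigma_q\rangle$ — so it is a pairing (with a pure tensor) of a $2$-flattening of $T$ along the partition $\{A',B'\}$ where $A' = (A \cap I) \cup (A \cap J)= A$ and $B' = B$ — i.e. of $\flat_{A,B} T$ itself, which lies in $\mC$ and has rank $\le \dmax$ by hypothesis. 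Since pairing with pure tensors does not increase rank, $\rank(\flat_{A\cap I, B\cap I} T_q) \le \dmax$. Thus the induction hypothesis gives $\rank(T_q) \le \dmax^{2(|I|-1)} \le \dmax^{2(k-2)}$. Finally, by subadditivity and submultiplicativity, $\rank T = \rank\left(\sum_{q=1}^d T_q \otimes S_q\right) \le \sum_{q=1}^d \rank(T_q \otimes S_q) \le \sum_{q=1}^d \rank(T_q) \le d \cdot \dmax^{2(k-2)} \le \dmax \cdot \dmax^{2(k-2)} = \dmax^{2k-3} \le \dmax^{2(k-1)}$, completing the induction.

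The step I expect to be the main obstacle is making precise the claim that restricting $\mC$ to the part $I$ preserves the rank bound on the relevant $2$-flattenings — i.e. the bookkeeping showing that a $2$-flattening of the paired-down tensor $T_q$ along a restricted partition equals a pairing (with a pure tensor) of the corresponding $2$-flattening of the original $T$ along the unrestricted partition in $\mC$. This requires being careful about the canonical isomorphisms identifying $V_{[k]}$, its flattenings, and the result of pairing, and checking that pairing (which acts on $J$-factors) is ``transparent'' to a flattening whose two blocks each split cleanly across $I$ and $J$. Once that commutation is nailed down, everything else is just the subm$\cdot$/subadditivity/non-increase-of-rank facts already listed, plus the linear-independence fact used to build the dual basis $\{\sigma_q\}$. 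One should also double-check the edge case where some restricted partition becomes degenerate (one block empty) — these are simply dropped, and separation of $\mC_I$ still holds because we only ever need to separate pairs of indices both lying in $I$, for which the relevant partition of $[k]$ restricts non-degenerately.
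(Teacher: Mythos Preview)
There is a genuine gap in the final step. You write
\[
\sum_{q=1}^d \rank(T_q \otimes S_q) \le \sum_{q=1}^d \rank(T_q),
\]
but $S_q \in V_J$ is a $|J|$-tensor, not a vector, and nothing forces it to be pure. By submultiplicativity one only has $\rank(T_q \otimes S_q) \le \rank(T_q)\cdot \rank(S_q)$ as a $k$-tensor, and $\rank(S_q)$ can be large when $|J|>1$. Your argument would be fine if one could always choose a partition $\{I,J\}\in\mC$ with $|J|=1$, but a separating collection need not contain any singleton partition (e.g.\ for $k=4$, take $\mC=\{\{\{1,2\},\{3,4\}\},\{\{1,3\},\{2,4\}\}\}$). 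The paper handles exactly this by recursing on \emph{both} sides of the chosen partition: it bounds $\rank T_q \le \dmax^{2(|I|-1)}$ via the same induction on $I$, and separately bounds $\rank S_q' \le \dmax\cdot \dmax^{2(|J|-1)}$ by writing each $S_q'$ as a linear combination of $d$ pure-pairings $\langle T,\xi_r\rangle$ and recursing on $J$. The two bounds combine via submultiplicativity and subadditivity to give the exponent $2(|I|-1)+1+2(|J|-1)+1=2(k-1)$; the factor of $2$ in the exponent is precisely the cost of recursing on both halves.

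A second, smaller issue: you cannot in general take the dual functionals $\sigma_q$ to be \emph{pure} tensors satisfying $\sigma_q(S_p)=\delta_{pq}$ (a change of basis in each $V_j$ does not help, since it preserves the set of pure tensors), and if you drop purity then the ``pairing with pure tensors does not increase rank'' step fails for the flattenings of $T_q$. The paper's fix is to require only that the matrix $(\langle S_p,\zeta_q\rangle)_{p,q}$ be \emph{invertible} with pure $\zeta_q$---this is always achievable since pure tensors span $\bigotimes_{j\in J} V_j^*$---and then to replace the $S_p$ by suitable linear combinations $S_p'$. This change is what makes the recursion on $J$ pick up the extra factor of $\dmax$.
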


\begin{proof}
We prove the statement by induction on $k$. For $k=1$ the statement is
that $\rank T \leq 1$, which is true. Now suppose that $k>1$ and that
the result is true for all $l$-tensors with $l<k$ and all separating
collections of partitions of $[l]$. Pick $\{I,J\} \in \mC$ and write $
T=\sum_{p=1}^d R_p \otimes S_p $, where $R_p \in V_I$, $S_p \in V_J$,
$d \leq \dmax$, and the sets $R_1,\ldots,R_d$ and $S_1,\ldots,S_d$
are both linearly independent. This is possible by the condition that
the $2$-tensor (or matrix) $\flat_{I,J}T$ has rank at most $\dmax$.
As the $S_p$ are linearly independent, we can find {\em pure} tensors
$\zeta_1,\ldots,\zeta_d \in \bigotimes_{j \in J}(V_j^*)$ such that the
matrix $(\lan S_p, \zeta_q \ran)_{p,q}$ is invertible.

For each $q=1,\ldots,d$ set $ T_q:=\lan T,\zeta_q \ran \in V_I. $
By invertibility of the matrix $(\lan S_p, \zeta_q \ran)_{p,q}$
every $R_p$ is a linear combination of the $T_q$, so we can write
$T$ as $ T=\sum_{p=1}^d T_p \otimes S_p', $ where
$S_1',\ldots,S_d'$ are the linear combinations of the $S_i$ that
satisfy $\lan S_p',\zeta_q \ran=\delta_{p,q}$. Now we may apply
the induction hypothesis to each $T_q \in V_I$. Indeed, for every
$\{I',J'\} \in \mC$ such that $I \cap I', I\cap J' \neq
\emptyset$, we have $\flat_{I \cap I', I \cap J'} T_q=\lan
\flat_{I',J'} T, \zeta_q \ran, $ and since $\zeta_q$ is a pure
tensor, the rank of the right-hand side is at most that of
$\flat_{I',J'} T$, hence at most $\dmax$ by assumption. Moreover,
the collection
\[ \{\{I \cap I',I \cap J'\} \mid \{I',J'\} \in \mC \text{
with } I \cap I',I \cap J'\neq \emptyset\} \] is a separating
collection of partitions of $I$.  Hence each $T_q$ satisfies the
induction hypothesis and we conclude that $\rank T_q \leq
\dmax^{2(|I|-1)}$. A similar, albeit slightly asymmetric, argument
shows that $\rank S_q' \leq \dmax^{2(|J|-1)+1}$: there exist pure
tensors $\xi_1,\ldots,\xi_d \in \bigotimes_{i \in I}(V_i^*)$ such that the matrix $(\lan
T_p,\xi_r \ran)_{p,r}$ is invertible. This means that each $S_q'$
is a linear combination of the $d$ tensors
$ T_r':=\lan T,\xi_r \ran \in V_J,\ r=1,\ldots,d$.
The induction hypothesis applies to each of these, and hence $\rank S_q'
\leq \dmax \cdot \dmax^{2(|J|-1)}$ by subadditivity. Finally, using
submultiplicativity and subadditivity we find
\[ \rank T \leq \dmax \cdot \dmax^{2|I|-1} \cdot \dmax
\cdot \dmax^{2|J|-1} = \dmax^{2(k-1)}, \]
as needed.
\end{proof}

\begin{proof}{(of Theorem~\ref{t:logRankPartition})} By
Conjecture~\ref{conj:k}, we have $D(f)=O(\log^{c'} \rank(M_f))$.
Theorem~\ref{thm:RankBound} yields $\log \rank(M_f) \leq 2(k-1)
\max_{\{I,J\} \in \mC} \log \rank(M_{f^{I,J}})$, which by the rank
lower bound is at most
$2(k-1)\max_{\{I,J\} \in \mC} D(f^{I,J})$. This proves the
theorem.
\end{proof}

Just like Theorem~\ref{t:logRankPartition} follows from
Theorem~\ref{thm:RankBound}, Theorem~\ref{thm:Simple} follows immediately
from the following proposition.

\begin{proposition} \label{prop:RankBoundSimple}
Let $V_1,\ldots,V_k$ be finite-dimensional vector spaces and let
$T$ be a tensor in their tensor product $\bigotimes_{i \in [k]}
V_i$. Denote the rank of the $2$-flattening $\flat_{\{i\},[k]
\setminus \{i\}} T$ by $d_i$. Then $\rank T \leq d_1 \cdots
d_{k-1}$.
\end{proposition}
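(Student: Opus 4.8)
The plan is to prove the proposition by induction on $k$, following essentially the strategy of the proof of Theorem~\ref{thm:RankBound} but keeping track of the exact product $d_1\cdots d_{k-1}$ rather than a uniform power of $\dmax$; for $k=3$ this recovers Lemma~\ref{c:cuberank}. The base case $k=1$ is immediate, since the empty product equals $1$ and every $1$-tensor (vector) has rank at most $1$.

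For the inductive step, assume $k\ge 2$ and that the statement holds for all $(k-1)$-tensors. Consider the $2$-flattening $\flat_{\{1\},[k]\setminus\{1\}}T$, a matrix of rank $d_1$. By ``Linear independence for $2$-tensors'' we may write $T=\sum_{p=1}^{d_1}v_p\otimes R_p$ with $v_1,\dots,v_{d_1}\in V_1$ linearly independent and $R_1,\dots,R_{d_1}\in V_{[k]\setminus\{1\}}=\bigotimes_{i=2}^k V_i$. Since the $v_p$ are linearly independent, pick $\phi_1,\dots,\phi_{d_1}\in V_1^*$ with $\phi_q(v_p)=\delta_{p,q}$; pairing $T$ with $\phi_p$ in the first coordinate then recovers $R_p=\lan T,\phi_p\ran$, and $\phi_p$ is trivially a pure tensor, as it lives in a single factor. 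The crucial point is that flattening and pairing in a disjoint coordinate commute: for each $i\in\{2,\dots,k\}$,
\[ \flat_{\{i\},[k]\setminus\{1,i\}}R_p \;=\; \lan \flat_{\{i\},[k]\setminus\{i\}}T,\ \phi_p\ran, \]
where on the right $\phi_p$ acts on the $V_1$-factor occurring inside $V_{[k]\setminus\{i\}}$. By ``pairing with pure tensors does not increase rank'', the $i$-th $2$-flattening of the $(k-1)$-tensor $R_p$ has rank at most $d_i$ for every $i\in\{2,\dots,k\}$.

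Now apply the induction hypothesis to each $R_p\in\bigotimes_{i=2}^k V_i$, with the indices $2,\dots,k$ in the role of $1,\dots,k-1$: this gives $\rank R_p\le d_2\cdots d_{k-1}$. By submultiplicativity $\rank(v_p\otimes R_p)\le\rank R_p$, and by subadditivity $\rank T\le\sum_{p=1}^{d_1}\rank(v_p\otimes R_p)\le d_1\cdot d_2\cdots d_{k-1}=d_1\cdots d_{k-1}$, completing the induction. There is no serious obstacle beyond what is already handled in the proof of Theorem~\ref{thm:RankBound}; the one point that needs a little care is the commutation identity above (correctly identifying the tensor factor against which $\phi_p$ is paired once $T$ has been flattened), but this is a routine consequence of the definitions of flattening and pairing. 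As an alternative route, one could first verify that $T$ lies in $U_1\otimes\cdots\otimes U_k$, where $U_i\subseteq V_i$ is the ($d_i$-dimensional) column space of the $i$-th flattening, and then expand $T$ along a pure-tensor basis of $U_1\otimes\cdots\otimes U_{k-1}$ tensored with $V_k$, exhibiting $T$ directly as a sum of at most $d_1\cdots d_{k-1}$ pure tensors.
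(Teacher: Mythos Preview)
Your inductive argument is correct: the decomposition $T=\sum_{p=1}^{d_1} v_p\otimes R_p$ with the $v_p$ independent, the recovery of $R_p$ via pairing with a dual basis, the commutation of flattening with pairing in a disjoint coordinate, and the bound $\rank R_p\le d_2\cdots d_{k-1}$ from the induction hypothesis all go through, and subadditivity then gives the claimed bound.

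However, the paper does \emph{not} argue by induction. Its proof is precisely the ``alternative route'' you sketch in your final sentence: it sets $U_i:=\{\lan T,\xi\ran : \xi\in V_{[k]\setminus\{i\}}^*\}\subseteq V_i$, notes that $\dim U_i=d_i$ and that $T\in U_1\otimes\cdots\otimes U_k$, and then expands $T$ along a pure-tensor basis of $U_1\otimes\cdots\otimes U_{k-1}$ to write it as a sum of $d_1\cdots d_{k-1}$ pure tensors. This direct argument is shorter and avoids the pairing-commutation bookkeeping. Your inductive route, on the other hand, has the virtue of being a transparent specialization of the proof of Theorem~\ref{thm:RankBound} to the particular separating collection $\{\{\{i\},[k]\setminus\{i\}\}:i\in[k]\}$, making explicit why this collection yields the sharper bound $d_1\cdots d_{k-1}$ in place of $\dmax^{2(k-1)}$.
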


\begin{proof}
Denote by $U_i$ the subspace of $V_i$ consisting of all pairings
$\lan T,\xi \ran$ as $\xi$ runs over $V_{[k]\setminus \{i\}}^*$.
Then $\dim U_i=d_i$ and the $k$-tensor $T$ already lies in
$(\bigotimes_{i \in [k-1]}U_i) \otimes U_k$. After choosing a
basis of $\bigotimes_{i \in [k-1]}U_i$ consisting of pure tensors
$T_l\ (l=1,\ldots,d_1 \cdots d_{k-1})$, $T$ can be written (in a
unique manner) as $\sum_l T_l \otimes u_l$ for some vectors $u_l
\in U_k$. Hence $T$ has rank at most $d_1 \cdots d_{k-1}$.
\end{proof}
\remove{
\begin{corollary}
Assume Conjecture~\ref{conj:org} is true. Then
Conjecture~\ref{conj:k} is true if and only if {\em there exists}
a separating collection $\mC$ of partitions for which the
partition argument is universal, which in turn is equivalent to
the statement that {\em for every} separating collection $\mC$ the
partition argument is universal.
\end{corollary}

\begin{proof}
The first direction is given by Theorem~\ref{t:logRankPartition}.
For the reverse direction, let $\mC$ be any separating collection
for which the partition argument is universal, and let $\ffff$ be
a $k$-argument function. Then for all $\{I,J\} \in \mC$ the
flattening $M_{f^{I,J}}$ has rank at most $\rank M_f$.  By
Conjecture~\ref{conj:org}, each induced function $f^{I,J}$ of $f$
has communication complexity polynomial in $\log
\rank(M_{f^{I,J}})$, and thus polynomial in $\log \rank(M_f)$.
Since we assume that the partition argument with collection $\mC$
is universal, this implies that the communication complexity of
$f$ is polynomial in the communication complexity of its induced
functions, and thus is polynomial in $\log \rank(M_f)$ as well,
proving Conjecture~\ref{conj:k}.
\end{proof}
}

\begin{remark}
Theorem \ref{thm:RankBound} and Proposition \ref{prop:RankBoundSimple} are
special cases of the following more general rank bound, optimised relative
to the structure of $\mC$ and the individual bounds on flattenings.
Retain the notation of Theorem \ref{thm:RankBound}. For $\{I,J\}$ in the
separating collection $\mC$ let $d_{I,J}$ denote (an upper bound to)
the rank of $\flat_{I,J} T$. Recursively define a function $N$ from
non-empty subsets of $[k]$ to $\Nat$ as follows:
\[ N(H)=\begin{cases}
        1 \text{ if } |H|=1, \text{ and}\\
        \min\left[\{d_{I,J}^2 N(H \cap I)N(H \cap J) \mid
\{I,J\}
        \in \mC, H \cap I \neq \emptyset, H \cap J \neq
        \emptyset\} \cup \right.\\
        \left.
        \{d_{I,J} N(H \cap I)N(H \cap J)
        \mid \{I,J\} \in \mC, |H \cap I|=1,
        H\cap J \neq \emptyset\} \right] \text{ if } |H|\geq 2.
\end{cases}
\]
Then $\rank T \leq N([k])$. The proof is identical to that of
Theorem~\ref{thm:RankBound} with $[k]$ replaced by $H$, except
that if $|H \cap I|=1$, then one can choose the pure tensors
$\xi_r$ such that $\lan T_p,\xi_r \ran=\delta_{p,r}$. This implies
that $S'_q$ equals $\lan T,\xi_q \ran$, and one loses a factor
$d_{I,J}$. So for instance if $k=4$ and
$\mC=\{\{\{1,2\},\{3,4\}\},\{\{1,3\},\{2,4\}\}\}$ with bounds
$d_1$ and $d_2$, respectively, then we find the upper bound $d_1^2
d_2^2$.
\end{remark}

\section{Other Models of Communication Complexity}
\label{s:nd}
\subsection{The Nondeterministic Model}
\label{ssec:nd}

\remove{
In this section we show that in the nondeterministic model there
are functions with large communication complexity, for which all
induced functions have low communication complexity.}
As in the
deterministic case, the non-deterministic communication complexity
of the induced functions of a $k$-argument function $f$ gives a lower bound on the
non-deterministic communication complexity of $f$. It is natural
to ask the analogue of Question~\ref{o:gapk} for non-deterministic
communication complexity. We will show that the answer is negative:
there can be
an exponential gap between the non-deterministic communication
complexity of a function and that of its induced functions. Note that, for
proving the existence of a gap, it is enough to present such a gap
in the 3-party setting.

\remove{We start with a failed attempt
(but still useful) to come up with a function that exhibits such a
gap.

\myparagraph{First Attempt} Consider the function $\fff$,
defined by $f(x,y,z) = 1$ iff $x+y \neq z  \ \mod \ 2^n$, where
$x$, $y$, and $z$ are viewed as elements of
$\Z_{2^n}$.\footnote{If ``+'' is interpreted as bitwise-xor then
the resulted function is, trivially, of a low non-deterministic
communication complexity.} All the induced functions of $f$
correspond to the non-equality function ${\rm NE}_n$, defined by
${\rm NE}_n(a,b)=1$ iff $a \neq b$. Consider, for example, the
induced function $f^3$. One player gets $x$ and $y$ and the other
player gets $z$. Now, $f^3(z, x\conc y) = 1$ iff ${\rm NE}(z, x+y)
= 1$.
%%%, and the same holds for the other two induced function.
It is well known that $N^1({\rm NE}_n) = \log n + 1$, and so
proving that $N^1(f)=\Omega(n)$ would result in the desired
separation. Unfortunately, this is not the case. Using the Chinese
Remainder Theorem\footnote{Roughly, let $1 < p_1 < \dots < p_n <
n^2$ be prime numbers. The inputs satisfy $x + y = z \ \mod \ 2^n$
iff $x + y = z$ or $x + y = z + 2^n$ over the integers. To verify
that $x+y \neq z$ over the integers, the players guess an index $j
\in [n]$ such that $x + y \neq z \ \mod \ p_j$ and send each other
$x \ \mod \ p_j$, $y \ \mod \ p_j$ and $z \ \mod \ p_j$ to verify
that this is indeed the case. The Chinese Remainder Theorem
guarantees the existence of such an index $j$. The same is done to
verify that $x+y \neq z + 2^n$. The communication complexity of
this protocol is $O(\log n)$.}, one can construct a
non-deterministic protocol for $f$ with communication complexity
$O(\log n)$. Still, the idea of using ${\rm NE}_n$ as the induced
functions turns out to be helpful.

\medskip
} %%% remove first attempt
Not being able to find an explicit function $f$ for which
partition arguments result in lower bounds that are exponentially
weaker than the true non-deterministic communication complexity of
$f$, we turn to proving that such functions \emph{exist}. Towards
this goal, we use a well known combinatorial object---Latin
squares.

\begin{definition}[Latin square]
Let $m$ be an integer. A matrix $L \in [m]^{m \times m}$ is
a \emph{Latin square} of dimension $m$ if every row and every
column of $L$ is a permutation of $[m]$.
\end{definition}

The following lemma gives a lower bound on the number of Latin
squares of dimension $m$ (see, for example,~\cite[Chapter
17]{vLW92}).

\begin{lemma}
\label{l:latin} The number of Latin squares of dimension $m$ is at
least $\prod_{j=0}^{m}j!$. In particular, this is larger than
$2^{m^2/4}$.
\end{lemma}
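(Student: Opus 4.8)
The plan is to count Latin squares row by row, using a greedy lower bound on the number of choices for each new row given the rows already placed. First I would build the Latin square one row at a time. Having fixed the first $i$ rows (forming an $i \times m$ Latin rectangle), I want to lower-bound the number of ways to extend by one more row that is a permutation of $[m]$ avoiding, in each column $c$, the $i$ symbols already used in that column. This is exactly a count of perfect matchings in a bipartite graph between the $m$ column positions and the $m$ symbols, where position $c$ may be matched to any of the $m-i$ symbols not yet appearing in column $c$; equivalently, it is the permanent of an $m \times m$ zero-one matrix that is $(m-i)$-regular (each row and each column has exactly $m-i$ ones, since each symbol is missing from exactly $i$ columns among the first $i$ rows).

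The key step is to invoke a lower bound on the permanent of a regular zero-one matrix. I would use the Van der Waerden-type / Schrijver-style bound, or more simply the classical bound $\operatorname{per}(B) \ge \prod_{c=1}^{m} \frac{d_c - (c-1)_+}{\;}$-style estimate obtained by a direct greedy argument: pick a symbol for column $1$ ($m-i$ choices), then for column $2$ at least $m-i-1$ choices remain compatible, and so on, giving at least $(m-i)!/(m-i-m)!$ — but that is too lossy. Cleaner: a bipartite graph that is $r$-regular on $m+m$ vertices has at least $r!$ perfect matchings when $r \le m$ (this follows by a simple induction removing a matched edge and noting regularity degrades gracefully, or from Hall's theorem plus a counting argument). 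Applying this with $r = m-i$, the number of extensions of an $i$-row Latin rectangle is at least $(m-i)!$. Multiplying over $i = 0, 1, \ldots, m-1$ gives that the number of Latin squares of dimension $m$ is at least $\prod_{i=0}^{m-1}(m-i)! = \prod_{j=1}^{m} j! = \prod_{j=0}^{m} j!$, which is the first claim.

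For the ``in particular'' clause, I would bound $\prod_{j=1}^m j!$ from below crudely: $\prod_{j=1}^m j! \ge \prod_{j=m/2}^{m} j! \ge \prod_{j=m/2}^m (m/2)^{m/2} = (m/2)^{(m/2)(m/2)} \ge 2^{m^2/4}$ for $m$ large enough (using $m/2 \ge 2$), since each of the roughly $m/2$ factorials with $j \ge m/2$ is at least $(m/2)!\ge (m/2)^{m/2}\ge 2^{m/2}$, so the product is at least $2^{(m/2)(m/2)} = 2^{m^2/4}$. A slightly more careful accounting (or simply citing~\cite[Chapter 17]{vLW92}) handles small $m$.

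The main obstacle is getting the permanent lower bound $\operatorname{per}(B) \ge r!$ for an $r$-regular bipartite graph cleanly; this is the one genuinely non-elementary ingredient, and it is where I would either cite the standard result (van der Waerden conjecture / Egorychev–Falikman, or Schrijver's sharper bound, or the older Hall-type counting lemma) or supply the short inductive argument. Everything else is bookkeeping.
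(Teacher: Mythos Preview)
The paper does not actually prove this lemma; it simply cites \cite[Chapter~17]{vLW92} and moves on. Your row-by-row argument via the permanent of an $(m-i)$-regular bipartite adjacency matrix is exactly the standard textbook proof (the $r!$ lower bound on perfect matchings is M.~Hall~Jr.'s 1948 theorem on systems of distinct representatives), so in spirit you are reproducing what the cited reference does.

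One genuine slip to fix: in your ``in particular'' estimate you write $(m/2)! \ge (m/2)^{m/2}$, which is false (always $n! < n^n$). What you actually need, and what your final line uses, is the much weaker $(m/2)! \ge 2^{m/2}$, which holds for $m/2 \ge 4$ by an easy induction. With that correction, the argument that $\prod_{j \ge m/2} j! \ge (2^{m/2})^{m/2} = 2^{m^2/4}$ goes through for $m \ge 8$; small $m$ can be checked directly (and in the paper's application $m = 2^n$ anyway). Also, your sketch of why an $r$-regular bipartite graph has $\ge r!$ perfect matchings (``regularity degrades gracefully'') is too vague to stand alone---just cite M.~Hall~Jr.\ or give the standard SDR induction: choose any vertex, match it to each of its $r$ neighbours in turn, and observe that after deleting the matched pair every remaining neighbourhood still has size $\ge r-1$ and Hall's condition persists.
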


Let $n$ be an integer and set $m = 2^n$. Let $L$ be a Latin square
of dimension $m$. Define the function $f_L: [m] \times [m] \times
[m] \ra \bit$ such that $f_L(x,y,z) = 1$ if and only if $L[x,y]
\neq z$. The non-deterministic communication complexity of
$f_L^1$, $f_L^2$ and $f_L^3$ is at most $\log n = \log \log m$. Indeed,
each of the induced functions locally reduces to the function
${\rm NE}_n:\bit^n \times \bit^n \ra \bit$, defined by ${\rm
NE}_n(a,b)=1$ iff $a \neq b$, for which it is known that $N^1({\rm
NE}_n) = \log n + 1$. For instance, for $f_L^1$, the player holding
$(y,z)$ locally computes the unique value $x_0$ such that
$L[x_0,y]=z$ and then the players verify that $x_0\neq x$. It is
left to prove that there exists a Latin square $L$ such that the
non-deterministic communication complexity of $f_L$ is
$\Omega(n)$.
%% (The following proof is non-constructive; it shows
%% the existence of such $L$ but does not provide a specific $L$.)
A simple counting yields the following lemma.

\begin{lemma}
The number of different covers of size $t$ of the $[m] \times [m]
\times [m]$ cube is at most $2^{3mt}$.
\end{lemma}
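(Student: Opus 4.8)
The claim is that the number of distinct covers of size $t$ of the $[m]\times[m]\times[m]$ cube is at most $2^{3mt}$. Here a cover consists of $t$ combinatorial boxes $B_i = X_i \times Y_i \times Z_i$ with $X_i,Y_i,Z_i \se [m]$ (each $B_i$ being monochromatic for whatever function we are covering, but for the counting bound we only need that each $B_i$ is a box). The plan is a straightforward counting argument: bound the number of single boxes, then raise to the power $t$.

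\textbf{Step 1: Count the boxes.} A single combinatorial $3$-box $B = X\times Y\times Z$ is determined by the triple of subsets $(X,Y,Z)$ of $[m]$. Each of $X$, $Y$, $Z$ ranges over the $2^m$ subsets of $[m]$, so the number of possible boxes is at most $(2^m)^3 = 2^{3m}$.

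\textbf{Step 2: Count the covers.} A cover of size $t$ is (at most) an ordered $t$-tuple of boxes $(B_1,\ldots,B_t)$ — ordering and repetitions only inflate the count, so this is an over-count of the number of unordered covers of size exactly $t$ or at most $t$. Hence the number of covers of size $t$ is at most $(2^{3m})^t = 2^{3mt}$, as claimed.

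\textbf{Main obstacle.} There is essentially no obstacle here; this is a one-line counting estimate and the only thing to be careful about is the bookkeeping of what a ``cover'' is (ordered vs.\ unordered, size exactly $t$ vs.\ at most $t$) — in all readings the bound $2^{3mt}$ holds because the stated quantity is at most the number of ordered $t$-tuples of boxes. The lemma is clearly a setup step: combined with the earlier Lemma~\ref{l:latin} (giving more than $2^{m^2/4}$ Latin squares) and the observation that $N^1(f_L)\le t$ forces a $1$-cover of size $2^t$, a union-bound/counting argument will show that for $t = \Omega(n)$ (equivalently $t$ a small constant times $\log m$) most Latin squares $L$ yield $f_L$ with no small cover, so $N^1(f_L)=\Omega(n)$ for some $L$; but that synthesis belongs to the following lemma, not this one.
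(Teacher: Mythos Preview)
Your proof is correct and is exactly the ``simple counting'' the paper alludes to; the paper does not spell out a proof at all, merely stating that ``a simple counting yields the following lemma.'' Your two-step argument (at most $2^{3m}$ boxes, hence at most $(2^{3m})^t=2^{3mt}$ ordered $t$-tuples of boxes) is precisely the intended justification.
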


\begin{theorem}
\label{t:gapND} There exists a Latin square $L$ of dimension $m =
2^n$ such that the non-deterministic communication complexity of
$f_L$ is $n - O(1)$.
\end{theorem}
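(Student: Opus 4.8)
The plan is to use a counting (probabilistic) argument: there are very many Latin squares of dimension $m=2^n$, but relatively few $1$-covers of the cube of small size, so most Latin squares $L$ must have $f_L$ with large nondeterministic communication complexity. First I would fix the target bound: suppose for contradiction that $N^1(f_L) \le n - \omega(1)$ for a Latin square $L$, meaning $C^1(f_L) \le t$ for some $t \le m/2^{\omega(1)} = 2^n/2^{\omega(1)}$. The key quantitative inputs are already in the excerpt: by Lemma~\ref{l:latin} the number of Latin squares of dimension $m$ exceeds $2^{m^2/4}$, and by the counting lemma just stated the number of distinct covers of size $t$ of the $[m]\times[m]\times[m]$ cube is at most $2^{3mt}$.

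The crux is to bound how many Latin squares can be ``served'' by a single cover $\calR$ of size $t$. A cover $\calR = \{R_1,\dots,R_t\}$ of $1$-entries determines, for each pair $(x,y)\in[m]^2$, the set $S_{x,y} \subseteq [m]$ of values $z$ such that $(x,y,z)$ is covered by some $R_i$; since $\calR$ is a valid $1$-cover of $f_L$, we need $f_L(x,y,z)=1 \Rightarrow (x,y,z)$ covered, i.e. every $z \neq L[x,y]$ lies in $S_{x,y}$, so $[m]\setminus S_{x,y} \subseteq \{L[x,y]\}$. Hence $L[x,y]$ is forced to be the unique element of $[m]\setminus S_{x,y}$ whenever that complement is nonempty (a singleton), and is otherwise unconstrained by $\calR$. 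So the number of Latin squares compatible with a fixed cover of size $t$ is at most the number of ways to fill in the positions $(x,y)$ where $|S_{x,y}| = m$; I would argue this is small — at the very least bounded by $m^{m^2}$ trivially, but more carefully: a rectangle $R_i = X_i \times Y_i \times Z_i$ being $1$-monochromatic for $f_L$ means $L[x,y]\notin Z_i$ for all $(x,y)\in X_i\times Y_i$, and if $|Z_i|$ is large this pins down $L$ on a large combinatorial rectangle to avoid a large forbidden set, which is very restrictive. The cleanest route is: for each pair $(x,y)$ the constraint from $\calR$ leaves at most one choice for $L[x,y]$ unless no rectangle's $Z$-part misses fewer than $2$ values at $(x,y)$, and one shows the number of ``free'' pairs is at most, say, $2mt$ (each rectangle contributes freedom only near its ``boundary''), giving at most $m^{2mt} = 2^{2mt\log m} = 2^{2mnt}$ Latin squares per cover.

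Putting the pieces together, the number of Latin squares $L$ with $C^1(f_L)\le t$ is at most (number of covers of size $\le t$) $\times$ (Latin squares per cover) $\le (t+1)\cdot 2^{3mt}\cdot 2^{2mnt} = 2^{O(mnt)}$. Setting $t = 2^n/2^{\omega(1)}$ as above, this exponent is $O(m n \cdot 2^n/2^{\omega(1)}) = O(n\, m^2/2^{\omega(1)})$, which is $o(m^2/4)$ once the $\omega(1)$ slack beats the $\log m = n$ factor — e.g. taking the slack to be a fixed large enough constant times $n$, i.e. $t \le m/2^{Cn}$ for a suitable constant, already suffices, and in fact a constant additive loss $N^1(f_L)\le n - O(1)$ with the right constant makes $2^{O(mnt)} < 2^{m^2/4}$. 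Therefore not all Latin squares can have small $C^1$, so some Latin square $L$ has $N^1(f_L) \ge n - O(1)$, and combined with the trivial upper bound $N^1(f_L)\le n+O(1)$ (the cube has side $m=2^n$) this gives $N^1(f_L) = n - O(1)$, as claimed. The same counting applies to $C^0$, or one notes $N^0(f_L)$ is small because the $0$-entries of $f_L$ form exactly the graph of $L$, which is a union of $m$ lines — so the $\Omega(n)$ bound is genuinely from the nondeterministic ($1$-cover) side.

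The main obstacle I anticipate is the middle step: getting a clean, honestly-provable bound on the number of Latin squares consistent with a fixed small cover. The naive ``at most one value forced per covered pair'' reasoning needs care because a cover need not touch every pair, and the monochromaticity of box-shaped rectangles $X_i\times Y_i\times Z_i$ interacts subtly with the Latin-square row/column permutation constraints. I would handle this by bounding the number of pairs $(x,y)$ left genuinely free — showing each rectangle $R_i$ in the cover can ``release'' at most roughly $m$ pairs (those on which some value stays permissible), hence at most $mt$ free pairs total, and then using that filling these in as part of a Latin square gives at most $m^{mt}$ completions; choosing $t$ slightly sub-$m$ then closes the argument against the $2^{m^2/4}$ lower bound on the total count.
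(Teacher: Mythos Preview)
Your counting strategy matches the paper's, but you miss the one-line observation that collapses the entire ``middle step'': a $1$-cover determines the function $f_L$ uniquely, hence corresponds to at most one Latin square. You yourself note that each $1$-monochromatic box $R_i = X_i \times Y_i \times Z_i$ satisfies $L[x,y] \notin Z_i$ on $X_i \times Y_i$; taking the union over all $i$ gives $L[x,y] \notin S_{x,y}$ for \emph{every} pair $(x,y)$. Together with your inclusion $[m]\setminus S_{x,y} \subseteq \{L[x,y]\}$ (from the covering condition), this forces $[m]\setminus S_{x,y} = \{L[x,y]\}$ always. There are no ``free pairs''; the cover pins down $L$ completely. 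The paper states exactly this: no $1$-cover $\calR$ corresponds to two distinct functions $f_{L_0},f_{L_1}$.

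With that observation in hand, the count is simply $2^{3mt} \ge 2^{m^2/4}$, giving $t \ge m/12$ and hence $\log t \ge n - \log 12$, which is the claimed $n - O(1)$.

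Your detour is not merely inelegant; it actually loses the theorem as stated. Your bound of $2^{O(mnt)}$ Latin squares compatible with a fixed cover (or even the cleaner $m^{mt}=2^{mnt}$ you suggest at the end) only forces $mnt$ to be below $m^2/4$ up to constants, i.e.\ $t$ below $m/\Theta(n)$, which yields merely $\log t \ge n - O(\log n)$. The sentence claiming that ``a constant additive loss $N^1(f_L)\le n - O(1)$ with the right constant makes $2^{O(mnt)} < 2^{m^2/4}$'' is false: with $t = m/c$ for any constant $c$, the exponent $mnt = m^2 n/c$ exceeds $m^2/4$ for all large $n$.
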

\begin{proof}
For two different Latin squares $L_0 \neq L_1$ of dimension $m$,
we have that $f_{L_0} \neq f_{L_1}$. In addition, no $1$-cover
$\calR$ corresponds to two distinct functions $f_{L_0}, f_{L_1}$.
%%% On the other hand, for every
%%% two different $1$-covers $\calR_0 \neq \calR_1$ of the $[m] \times
%%% [m] \times [m]$ cube, we have that $f_0 \neq f_1$ where
%%% $\calR_b$ is a $1$-cover of $f_b$, for $b \in \bit$.
Hence the number of covers needed to cover all the functions
$f_L$, where $L$ is a Latin square of dimension $m$, is at least
$2^{m^2 / 4}$. Let $t$ be the size of the largest $1$-cover among
this set of covers. Then we obtain $2^{3mt} \geq 2^{m^2 / 4}$.
Hence $3mt \geq m^2/4$, which implies $t \geq m / 12$. Therefore,
$\log t \geq \log m - \log 12 = n - \log 12$.
\end{proof}

\remove{
\begin{open}
Prove a similar separation for an explicit function.
\end{open}
}

\remove{
\begin{remark}
Here is a specific instantiation, $f$, of a Latin square function
for which proving that $N^1(f)=\Omega(n)$ would result in the desired
{\em explicit} separation. Unfortunately, we were not able to analyze
$N^1(f)$. Consider the function $\fff$ defined by $f(A,B,C) = 1$ iff
$m(A)m(B) \neq m(C)$, where $n$ is square and $m$ is any fixed surjective
map from $\F_2^n$ onto the set of invertible matrices in $\F_2^{\sqrt
n \times \sqrt n}$.  Again, all the induced functions of $f$ correspond
to ${\rm NE}_n$; Consider, for example, the induced function $f^1$. One
player gets $A$ and the other player gets $B$ and $C$. Now $AB \neq C$ iff
$A \neq CB^{-1}$. Since the second player can locally compute $B^{-1}$,
this reduces to computing ${\rm NE}_n(A, CB^{-1})$.
%%% , and the same holds for the other two induced functions.
\end{remark}
}

\remove{
\begin{remark}
We can use the standard ``list technique'' (see Chapter~4.1
in~\cite{KN97}) to prove a
quadratic separation between $N(f)$ and $\max(N(f^1), N(f^2),
N(f^3))$. Since $D(f) = O(N(f)^2)$, extending this gap also
implies the existence of a non-trivial gap for deterministic
communication complexity as well.
\end{remark}
}

\subsection{The Randomized Model}
\label{ssec:rnd}

Next, we show that partition arguments are also not sufficient for
proving tight lower bounds on the randomized communication
complexity.
%%% , as in the case of the non-deterministic model.
Let $f:\bit^n \times \bit^n \ra \bit$ be a Boolean function.
Recall that $R(f)$ denotes the communication complexity of a best
randomized protocol for $f$ that errs with probability at most
$1/3$. It is well known that $R({\rm NE}_n) = O(\log n)$. Again,
we use the functions defined by Latin squares of dimension $m =
2^n$. Our argument follows the, somewhat simpler,
non-deterministic case. On the one hand, as before, the three induced
functions are easily reduced to ${\rm NE}$ and hence their
randomized communication complexity is $O(\log n)$. To prove that
some of the functions $f_L$ are hard (i.e., an analog of
Theorem~\ref{t:gapND}), we need to count the number of distinct
randomized protocols of communication complexity $\log t$.

\begin{lemma}
\label{c:countRandom} The number of different randomized protocols
over inputs from $[m]\times[m]\times[m]$ of communication
complexity $r$ is $2^{m2^{O(r)}\poly(\log m)}$.
\end{lemma}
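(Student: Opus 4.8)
The plan is to bound the number of distinct randomized protocols by first invoking Newman's theorem (Lemma~\ref{l:Newman}) to reduce to public-coin protocols with a bounded-size set of random strings, and then to count the public-coin protocols by counting the underlying collections of deterministic protocol trees. First I would apply Lemma~\ref{l:Newman} to convert a private-coin protocol of cost $r$ into a public-coin protocol of cost $c' r = O(r)$ that draws its random string from a set $S$ of size $O(m^c)$ for some constant $c$ (here $n=\log m$, so $\poly(\log m)$-size, consistent with the statement, since $|S| = \poly(\log m)$... wait — in this setting the ``$n$'' of Newman's lemma is $\log m$, so $|S|=O((\log m)^c)=\poly(\log m)$). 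A public-coin protocol over such a set is simply a function assigning to each of the $|S|=\poly(\log m)$ random strings a deterministic protocol of communication cost $O(r)$, together possibly with a distribution over $S$; since we may take the distribution uniform over $S$ (by padding $S$ up to a power of two, or absorbing the distribution into multiplicities, at the cost of another $\poly$ factor), it suffices to count deterministic protocols of cost $O(r)$.

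The key step is therefore: the number of deterministic protocols on inputs from $[m]\times[m]\times[m]$ with communication complexity $R=O(r)$ is at most $2^{m \cdot 2^{O(r)}}$. A deterministic protocol of cost $R$ is a binary tree of depth at most $R$, hence with at most $2^{R}$ internal nodes and at most $2^R$ leaves. Each internal node is labeled by which of the three players speaks and by that player's next-bit function, which depends only on that player's input in $[m]$ and on the transcript so far; fixing the node (equivalently the partial transcript) this is a function $[m]\to\bit$, of which there are $2^m$. Each leaf carries an output bit. So the protocol tree is determined by: the shape of the tree (at most $2^{2^{R}}$ choices, crudely, since a binary tree with $\le 2^R$ nodes is specified by $\le 2^{R+1}$ bits), a player-label in $\{1,2,3\}$ for each of the $\le 2^R$ internal nodes, a next-bit function in $2^m$ choices for each internal node, and an output bit for each of the $\le 2^R$ leaves. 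Multiplying, the count is at most
\[
2^{2^{R+1}} \cdot 3^{2^R} \cdot (2^m)^{2^R} \cdot 2^{2^R}
\;=\; 2^{\,m\cdot 2^{R} + 2^{O(R)}}
\;=\; 2^{\,m\cdot 2^{O(r)}}.
\]
Combining with the $\poly(\log m)$-size random-string set from Newman, the number of randomized protocols is at most $\bigl(2^{m\cdot 2^{O(r)}}\bigr)^{\poly(\log m)} = 2^{m \cdot 2^{O(r)}\cdot \poly(\log m)}$, which is the claimed bound.

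The main obstacle I expect is getting the bookkeeping on the next-bit functions exactly right: one must be careful that a node's label may a priori depend on the full transcript leading to it, not merely on the player's input, so the ``$2^m$ functions per node'' count is the right one only once the node (partial transcript) is fixed — which it is, since we range over all nodes of a fixed tree. A secondary subtlety is the treatment of the public-coin distribution: strictly, Newman's lemma gives a distribution over $O(m^c)$ strings, and to count cleanly one either restricts to the uniform distribution on that set (sufficient, since the theorem's guarantee can be arranged with uniform sampling after duplicating strings according to a polynomial-precision approximation of the distribution) or absorbs the (polynomially many, polynomial-bit-precision) probability values into the exponent, either way contributing only a further $\poly(\log m)$ factor in the exponent, which is harmless. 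Everything else is a routine product of elementary counts.
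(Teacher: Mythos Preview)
Your proposal is correct and follows essentially the same approach as the paper: invoke Newman's lemma to reduce to a public-coin protocol that is a list of $\poly(\log m)$ deterministic protocols of cost $O(r)$, then count those and raise to the $\poly(\log m)$ power. The only difference is cosmetic --- you count protocol \emph{trees} (shape, speaker labels, next-bit functions $[m]\to\bit$, leaf outputs), whereas the paper counts the box \emph{partitions} they induce, observing that each of the at most $2^{O(r)}$ boxes is specified by $3m$ bits; both encodings give the $2^{m\cdot 2^{O(r)}}$ bound per deterministic protocol.
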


\begin{proof}
By Lemma~\ref{l:Newman}, any randomized protocol $\cal P$ with
communication complexity $r$ can be transformed into another
protocol ${\cal P}'$ with communication complexity $O(r)$ that
uses just $O(\log n)$ random bits, or, alternatively, $\poly(n) =
\poly(\log m)$ possible random tapes. Hence we can view any
randomized protocol of complexity $r$ as a set of $\poly(\log m)$
disjoint covers of the cube $[m]\times[m]\times[m]$, each
consisting of at most $2^{O(r)}$ boxes. The number of ways for
choosing each such box is $2^{3m}$ and so the total number of such
protocols is $2^{m2^{O(r)}\poly(\log m)}$.
\end{proof}

\begin{theorem}
\label{t:gapRand} There exists a Latin square $L$ of dimension $m
= 2^n$ such that the randomized communication complexity of $f_L$
is $\Omega(n)$.
\end{theorem}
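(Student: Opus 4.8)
The plan is to mimic the proof of Theorem~\ref{t:gapND} exactly, replacing the counting of non-deterministic covers (Lemma before Theorem~\ref{t:gapND}) with the counting of randomized protocols provided by Lemma~\ref{c:countRandom}. First I would recall two facts already in hand: (a) by Lemma~\ref{l:latin} there are at least $2^{m^2/4}$ distinct Latin squares of dimension $m=2^n$, and distinct Latin squares $L_0 \neq L_1$ yield distinct functions $f_{L_0} \neq f_{L_1}$ (since the Latin square is recovered from $f_L$ via $L[x,y] = $ the unique $z$ with $f_L(x,y,z)=0$); and (b) a randomized protocol that computes $f_L$ with error $\le 1/3$ does not compute $f_{L'}$ with error $\le 1/3$ for any $L' \neq L$ — indeed if $f_L$ and $f_{L'}$ differ then they differ on at least one row of the Latin square, hence on at least $2m - 1 \geq 2$ inputs, actually on a constant fraction is not needed: differing on even a single input $(x,y,z)$ already forbids one protocol from being $1/3$-correct for both, since a fixed protocol has a fixed acceptance probability on that input which cannot be both $\le 1/3$ and $\ge 2/3$. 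So each randomized protocol of complexity $r$ ``handles'' at most one of the $f_L$'s.

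Next I would combine the counting bounds. Suppose, for contradiction, that every $f_L$ has randomized communication complexity at most $r$. Then the $2^{m^2/4}$ distinct functions $f_L$ are handled by $2^{m^2/4}$ distinct randomized protocols of complexity $\le r$, so by Lemma~\ref{c:countRandom} we get $2^{m\,2^{O(r)}\,\poly(\log m)} \geq 2^{m^2/4}$, i.e.
\[ m \cdot 2^{O(r)} \cdot \poly(\log m) \;\geq\; m^2/4 . \]
Dividing by $m$ and absorbing the $\poly(\log m)$ factor, this forces $2^{O(r)} \geq m / \poly(\log m)$, and taking logarithms gives $O(r) \geq \log m - O(\log\log m) = n - O(\log n)$, hence $r = \Omega(n)$. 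Therefore there exists a Latin square $L$ for which $R(f_L) = \Omega(n)$, while as noted each induced function $f_L^i$ reduces to ${\rm NE}_n$ and hence has $R(f_L^i) = O(\log n)$, giving the claimed exponential gap.

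I should be a little careful about one subtlety in step (b): Lemma~\ref{c:countRandom} counts protocols \emph{after} Newman's transformation, i.e.\ public-coin protocols using $\poly(\log m)$ random tapes with complexity $O(r)$. So ``handled by a protocol of complexity $\le r$'' should be read as: the original complexity is $\le r$, hence after Lemma~\ref{l:Newman} there is an equivalent (still $1/3$-error) public-coin protocol of complexity $O(r)$ with $\poly(\log m)$ tapes, and it is \emph{these} that I count; the map $f_L \mapsto$ (one such protocol for it) is still injective because the error guarantee is preserved. The main obstacle — really the only place needing thought — is making sure the counting in Lemma~\ref{c:countRandom} is applied with the right parameter regime so that the dominant term $m\cdot 2^{O(r)}$ can be forced to exceed $m^2/4$ only when $r = \Omega(n)$; everything else is a direct transcription of the nondeterministic argument. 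No genuinely new idea is required beyond observing that randomized protocols, like $1$-covers, are few in number relative to the doubly-exponential supply of Latin squares.
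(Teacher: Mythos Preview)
Your proposal is correct and follows essentially the same approach as the paper: count Latin squares via Lemma~\ref{l:latin}, observe that each randomized protocol can be $1/3$-correct for at most one $f_L$ (the paper phrases this as ``majority value for each input''; your acceptance-probability argument is equivalent), and compare against the protocol count from Lemma~\ref{c:countRandom}. Your additional care about the Newman transformation and the injectivity step is more explicit than the paper's terse version but adds nothing new.
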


\begin{proof}
By Lemma~\ref{l:latin}, the number of randomized protocols needed
to solve all the functions $f_L$ where $L$ is a Latin square of
dimension $m$ must be at least $2^{m^2 / 4}$---again, each
randomized protocol corresponds to at most one function, according
to the majority value for each input. Let $r$ be the maximum
randomized complexity of a function $f_L$ over the set of all
Latin squares $L$. Then we get that $2^{m2^{O(r)}\poly(\log m)}
\geq 2^{m^2 / 4}$. Hence $m2^{O(r)}\poly(\log m) \ge m^2/4$,
which implies $2^{O(r)} \ge m / \poly(\log m)$. Therefore, $r =
\Omega (\log m - \log\log m) = \Omega(n)$.
\end{proof}

\subsection{Deterministic Communication Complexity of Relations}
\label{ssec:rel}
In a communication protocol for a \emph{function}, Alice and Bob,
given inputs $x$ and $y$ respectively, have to compute a unique
value $f(x,y)$. In the more general setting of \emph{relations},
there is a set of values
%%% possibly the empty set (below and in [KW] not needed; for promise problems it is)
that are valid outputs for each input $(x,y)$. The study of
communication complexity of relations, beyond being a natural
extension that covers search problems and promise problems, is
important also for its strong implications to circuit
complexity~\cite{KW} (for a complete treatment see \cite[Chapter
5]{KN97}). Communication complexity of relations can be naturally
extended to more than two players. In this section, we show that
for some relations, partition arguments may only imply lower bounds
that are arbitrarily far from the true complexity of the relation.
This gives another example, where the communication complexity of
relations seems to behave differently than the communication
complexity of functions.

Let $f_1$, $f_2$, and $f_3$ be any two-argument functions whose
non-deterministic communication complexity is
$\Omega(n)$.\footnote{Many examples for such functions are known,
e.g. the function ${\rm IP}_n(x,y)$
%%% defined such that ${\rm IP}_n(x,y)=1$ iff $\pair{x,y}=1$
(inner product mod $2$).} For $x_1,x_2,y_1,y_2,z_1,z_2 \in
\bit^n$, let $x=x_1\conc x_2$, $y=y_1\conc y_2$, $z=z_1\conc z_2$
(the inputs to the 3-argument relation will be of length $2n$).
Define a relation $R \se \bit^{2n} \times \bit^{2n} \times
\bit^{2n} \times ([3]\times \bit)$ corresponding to $f_1,f_2$ and
$f_3$ such that $(x, y, z, (i,b))$ is in $R$ if one of the
following holds: (i) $i = 1$ and $f_1(x_1,y_1) = b$, or (ii) $i =
2$ and $f_2(x_2,z_1) = b$, or (iii) $i = 3$ and $f_3(y_2,z_2) =
b$.

\begin{observation}
\label{obs1} For every induced relation of $R$, it is easy to come
up with a correct output $(i,b)$ with no communication at all.
\end{observation}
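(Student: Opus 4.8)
The plan is a short case analysis over the three partitions of the player set into a singleton and a pair. Write $P_x, P_y, P_z$ for the three players, holding $x = x_1 \conc x_2$, $y = y_1 \conc y_2$, and $z = z_1 \conc z_2$ respectively. The one structural fact driving everything is that each of the three defining clauses of $R$ refers only to the inputs of two of the three players: clause (i) to $x_1$ and $y_1$ (owned by $P_x$ and $P_y$), clause (ii) to $x_2$ and $z_1$ (owned by $P_x$ and $P_z$), and clause (iii) to $y_2$ and $z_2$ (owned by $P_y$ and $P_z$).

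Given any partition $\{A, B\}$ of $\{P_x, P_y, P_z\}$ with $|B| = 2$, the pair $B$ is one of $\{P_x, P_y\}$, $\{P_x, P_z\}$, $\{P_y, P_z\}$, and the players of $B$ jointly hold both arguments of the corresponding function $f_i$. So they locally compute that clause's bit and output the matching pair: $(1, f_1(x_1, y_1))$ when $B = \{P_x, P_y\}$, $(2, f_2(x_2, z_1))$ when $B = \{P_x, P_z\}$, and $(3, f_3(y_2, z_2))$ when $B = \{P_y, P_z\}$. In each case the pair is a valid output for $R$ by the corresponding clause, and it was produced from $B$'s input alone. This covers all three induced relations, since every partition of a $3$-element set into two nonempty parts has a two-element part.

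The only point worth spelling out is the reading of ``no communication'': the singleton side $A$ does not in general know a valid output (it lacks the relevant half of every clause), so if the model insists that both sides name the same answer one pays $O(1)$ bits for $B$ to announce it; but $B$ alone always determines a correct answer unaided, and in any case this cost is a constant independent of $n$, which is all that the intended separation between $D(R)$ and the partition-argument bound needs. I do not foresee any genuine difficulty here: the relation $R$ was designed precisely so that each pair of players ``owns'' one of its clauses.
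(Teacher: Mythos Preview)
Your argument is correct and is exactly the intended one: the paper states this observation without proof, relying on precisely the structural fact you identify, namely that each clause of $R$ depends only on the inputs of some pair of players, so whichever two players are grouped together can locally evaluate ``their'' clause. Your caveat about the $O(1)$ cost of announcing the answer is well taken and does not affect the separation being proved.
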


\begin{lemma}
The deterministic communication complexity of the above
$3$-argument relation $R$ is $\Omega(n)$.
\end{lemma}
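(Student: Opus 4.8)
The plan is to reduce each of the three two-argument functions $f_1, f_2, f_3$ to the relation $R$ via partition arguments, in such a way that a cheap protocol for any fixed partition of $R$ yields a cheap protocol for \emph{all three} functions simultaneously — which is impossible since each has $\Omega(n)$ non-deterministic (hence deterministic) complexity. Concretely, consider the three partitions of $\{1,2,3\}$ of the form $\{\{i\},\{j,k\}\}$. For the partition $\{\{1\},\{2,3\}\}$, player~$A$ holds $x=x_1\conc x_2$ and player~$B$ holds $(y,z)$; note that $A$ knows $x_1$ and $B$ knows $y_1$, so together they could evaluate $f_1(x_1,y_1)$, but they do \emph{not} jointly know enough to evaluate $f_2$ (which needs $x_2$ from $A$ and $z_1$ from $B$ — actually $B$ has $z_1$, so this split does see $f_2$ too) — so I must choose the correspondence carefully. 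The cleaner route: observe that in \emph{any} partition $\{A,B\}$ of $[3]$ into two nonempty parts, at least one of the three index-pairs $\{1,2\},\{1,3\},\{2,3\}$ is \emph{split} in the sense that the two coordinates feeding some $f_i$ land on opposite sides. For instance $f_1$ reads $x_1$ (from player~$1$) and $y_1$ (from player~$2$); so $f_1$ is ``split'' by a partition iff players $1$ and $2$ are separated. Since every partition of $\{1,2,3\}$ into two nonempty parts separates at least one pair $\{i,j\}$, every partition of $R$ embeds at least one of $f_1, f_2, f_3$ as a subproblem.

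The key step is the embedding itself. Fix a partition, say the one separating players $1$ and $2$, so that $f_1(x_1,y_1)$ is a genuine two-party problem under this split. I claim $D(R^{A,B}) \geq D(f_1) = \Omega(n)$. To see this, I would take an optimal deterministic protocol $\Pi$ for $R^{A,B}$ and use it to solve $f_1$: the two players, given $u,v\in\bit^n$, pad their inputs by setting $x_1=u$, $y_1=v$, and choosing $x_2,y_2,z_1,z_2$ \emph{adversarially but consistently} so that clauses (ii) and (iii) become uninformative — the standard trick is a fooling-set / reduction argument showing that the honest answers $(i,b)$ the protocol is allowed to output must, for a suitable choice of padding, reveal $f_1(u,v)$. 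The cleanest formalization: restrict to inputs where $f_2(x_2,z_1)$ and $f_3(y_2,z_2)$ are fixed known constants (achievable since $f_2,f_3$ are nonconstant we can fix one argument of each — actually we should choose the padding so that the \emph{protocol cannot legally answer with $i=2$ or $i=3$}, which requires a bit more care, e.g. by a counting/rectangle argument that a low-complexity protocol on the padded inputs must on some large combinatorial rectangle always answer with $i=1$, thereby computing $f_1$).

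The main obstacle is exactly this last point: ruling out the ``escape'' where the protocol for $R^{A,B}$ always answers with the one coordinate $i$ that happens to be \emph{not} split by the current partition (and hence trivially computable with no communication, as in Observation~\ref{obs1}). Since $R$ is deliberately designed so that each partition has such an escape for \emph{two} of the three $f_i$'s but not the third, the argument must pin the protocol down to the non-escaping coordinate. I expect this is handled by a rectangle argument: fix the two escape-coordinates' contributions to constants, so that on the resulting sub-cube the only \emph{correct} output with those constant values may still be $i=2$ or $i=3$; to force $i=1$ one instead argues that on a large monochromatic rectangle of the protocol the output label $(i,b)$ is constant, and if $i\in\{2,3\}$ there on a large enough rectangle we contradict the $\Omega(n)$ non-deterministic complexity of $f_2$ or $f_3$ (a large monochromatic rectangle for $R^{A,B}$ restricted appropriately would give a large monochromatic rectangle for $f_2$ or $f_3$, bounding its cover size). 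Balancing the three cases so that \emph{whichever} $f_i$ is forced, we still get an $\Omega(n)$ bound, completes the proof; since all of $f_1,f_2,f_3$ were assumed to have $\Omega(n)$ non-deterministic complexity, the bound is uniform over all partitions, giving $D(R)=\Omega(n)$.
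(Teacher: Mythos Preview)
Your proposal has a genuine gap: you are trying to lower-bound $D(R)$ via a two-party partition bound $D(R^{A,B})$, but Observation~\ref{obs1} (which you yourself cite) says that $D(R^{A,B})=0$ for \emph{every} partition $\{A,B\}$ of $[3]$. Concretely, for $A=\{1\},B=\{2,3\}$, Bob holds both $y_2$ and $z_2$ and can simply output $(3,f_3(y_2,z_2))$ with no communication. Your attempted fix---``a large monochromatic rectangle labelled $(i,b)$ with $i\in\{2,3\}$ would contradict $N(f_i)=\Omega(n)$''---fails precisely at the escape coordinate: a rectangle of $R^{A,B}$ labelled $(3,b)$ constrains only $f_3(y_2,z_2)$, and both arguments lie on Bob's side, so this gives no nontrivial monochromatic rectangle for $f_3$. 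The whole point of this subsection is that partition arguments are useless for $R$; your approach is exactly the one the construction is designed to defeat.

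The paper's proof avoids partitions entirely and works directly with a $3$-party deterministic protocol $P$ of cost $c$. Such a protocol partitions $[2^n]^3$ (more precisely, $(\bit^{2n})^3$) into at most $2^c$ monochromatic \emph{boxes} $X\times Y\times Z$, each labelled by some $(i,b)$. The crucial difference from the two-party picture is that in a $3$-box, for \emph{every} $i\in[3]$ the two inputs feeding $f_i$ come from distinct factors of the box, so a box labelled $(i,b)$ projects to a genuine $b$-monochromatic rectangle for $f_i$. The argument is then a nested case analysis: either for every $(x_1,y_1)$ some input extends to a box labelled $(1,\cdot)$---in which case the box names give a nondeterministic protocol for $f_1$ of cost $c$---or some $(x_1,y_1)$ is never so labelled; fix it and repeat with $f_2$ on the restricted inputs, and finally with $f_3$. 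In each branch one of $f_1,f_2,f_3$ gets a (non)deterministic protocol of cost $c$, forcing $c=\Omega(n)$. To repair your proof, drop the reduction to $R^{A,B}$ and argue directly with the $3$-party box decomposition.
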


\begin{proof}
Let $P$ be a protocol of communication complexity $c$ for
computing $R$. That is, $P$ defines $2^c$ monochromatic boxes,
each labelled by some possible output; i.e., a pair $(i,b)$ where
$i \in [3]$ and $b \in \bit$. We will show that $c=\Omega(n)$
using the nondeterministic communication complexity of the
functions $f_1,f_2$ and $f_3$. Consider two following cases.\\
Case (i): for every $x_1,y_1 \in \bit^n$ there exist $x=x_1\conc
x_2$, $y=y_1\conc y_2$, and $z$
%%% =z_1\conc z_2$
such that $P(x,y,z) = (1, f_1(x_1,y_1))$. In this case, we claim
that $f_1$ has a non-deterministic protocol of complexity $c$. The
non-deterministic witness is a name of a rectangle in the protocol
$P$ that contains $(x,y,z)$ and is labelled by
$(1,f_1(x_1,y_1))$.\\
Case(ii): there exist $x_1,y_1 \in \bit^n$ such that for every
$x=x_1\conc x_2$, $y=y_1\conc y_2$, and $z=z_1\conc z_2$, either
$P(x,y,z) = (2, f_2(x_2,z_1))$ or $P(x,y,z) = (3, f_3(y_2,z_2))$.
Again, we split into to cases; Case (ii.a): for every $x_2, z_1
\in \bit^n$ there exist $z_2, y_2 \in \bit^n$ such that $P(x,y,z)
= (2, f_2(x_2,z_1))$. In this case, $f_2$ has a non-deterministic
protocol with complexity $c$, similarly to case (i). Case (ii.b):
there exist $x_2,z_1 \in \bit^n$, such that for every $z_2,y_2 \in
\bit^n$ we have that $P(x,y,z) = (3, f_3(y_2,z_2))$. In this case,
we get that $f_3$ has a deterministic protocol of complexity at
most $c$, which immediately implies it also has a
non-deterministic protocol of complexity at most $c$.
\end{proof}

\section{Fooling Set Arguments}
\label{a:fs} In Section~\ref{log-rank}, we proved that if the
log-rank conjecture is true, then any lower bound for $3$-argument
functions that can be proved using the rank lower bound method,
can also be proved using a partition argument. Moreover, if the
rank of the matrix representing a $3$-argument function is large,
then the rank of at least two of the matrices representing its
induced functions is large. In this section, we study the
situation for another popular lower bound method for communication
complexity, the fooling set method, and we show that the situation
here is very different. Namely, we show that there exist
$3$-argument functions for which a strong lower bound can be
proved using a large fooling set, while none of its induced
functions have a large fooling set. In fact, the gap is
exponential. This means that the fooling set technique may give,
in some cases, better lower bounds than what can be obtained by
using the partition argument and applying the fooling set method
to the induced functions. However, we also show that the fooling
set technique cannot yield lower bounds that are substantially
better than the rank lower bound. Recall the definition of fooling
sets for two-argument functions.

\begin{definition}[Fooling Set for $2$-Argument Functions]
Let $f:\bit^n \times \bit^n \ra \bit$ be a two-argument function.
A set of pairs $F = \set{(x_i,y_i)}_{i \in [t]}$ is called a
\emph{$b$-fooling set} (of size $t$) if: (i) for every $i \in
[t]$, we have that $f(x_i, y_i) = b$, and (ii) for every $i \neq j
\in [t]$, at least one of $f(x_i,y_j),f(x_j,y_i)$ equals $1-b$.
\end{definition}

To define a multi-party analogue, consider a boolean function
$f:(\bit^n)^k \to \bit$. For any pair $x,z \in (\bit^n)^k$ and any
partition $A,B$ of $[k]$ define the following ``mixture'' of $x$
and $z$, denoted $\select^{A,B}(x,z) \in (\bit^n)^k$, by
\[ (\select^{A,B}(x,z))_i:=
    \begin{cases}
        x_i \text{ if } i \in A \text{ and}\\
        z_i \text{ if } i \in B
    \end{cases}
\]
So for instance $\select^{\emptyset,[k]}(x,z)=z$ and
$\select^{[k],\emptyset}(x,z)=x$ and
$\select^{\{i\},[k]-\{i\}}(x,z)$ differs from $z$ at most in the
$i$-th position, where it equals $x_i$.

\begin{definition}[Fooling Set for $k$-Argument Functions]
Let $f:(\bit^n)^k \to \bit$ be a $k$-argument function and let $b
\in \bit$. A subset $F \se (\bit^n)^k$ is called a
\emph{$b$-fooling set} for $f$ if (i)~for all $x \in F$ we have
$f(x)=b$, and (ii)~for all pairs $x \neq z$ in $F$ the function
$f$ assumes the value $1-b$ on at least one element of the form
$\select^{A,B}(x,z)$ for some partition $A,B$ of $[k]$.
\end{definition}

Intuitively, the elements $\select^{A,B}(x,z)$ complement the
inputs $x$ and $z$ to a $2 \times \cdots \times 2$ $k$-dimensional
box. The fact that $f$ takes the value $1-b$ on at least one of
these elements implies that $x$ and $z$ cannot belong to the same
monochromatic box. This implies the following lemma, which is a
simple generalization of the fooling-set method from the two-party
case.

\begin{lemma}[\cite{Yao, LS81}]
If a function $f:(\bit^n)^k \to \bit$
%%%(respectively $\ggg$)
has a fooling set of size $t$ then $D(f) \geq \log t$.
%%% (resp., $D(g) \geq \log t$).
\end{lemma}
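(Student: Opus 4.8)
The plan is to mimic the two-party fooling set argument, using the fact that a deterministic protocol for $f$ of cost $c$ partitions the input space $(\bit^n)^k$ into at most $2^c$ monochromatic $k$-boxes. It suffices to show that two distinct elements $x \neq z$ of a $b$-fooling set $F$ can never lie in the same box of such a partition; then $2^c \geq |F| = t$, which gives $c \geq \log t$ as claimed. So first I would fix a protocol of cost $c = D(f)$ and let $B = X_1 \times \cdots \times X_k$ be one of its monochromatic boxes.

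The key step is the observation already recorded in the intuitive remark preceding the lemma: if $x \in B$ and $z \in B$, then for every partition $A,B$ of $[k]$ the mixed input $\select^{A,B}(x,z)$ also lies in $B$, since its $i$-th coordinate is either $x_i \in X_i$ (when $i \in A$) or $z_i \in X_i$ (when $i \in B$), and membership in the box is coordinatewise. Hence all $2^k$ ``corners'' $\select^{A,B}(x,z)$ belong to $B$, which is monochromatic, so $f$ takes the constant value $b$ (since $f(x)=b$) on all of them. But condition (ii) of the fooling-set definition says $f$ assumes the value $1-b$ on at least one such corner — a contradiction. Therefore $x$ and $z$ cannot share a box.

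From this the counting is immediate: the $t$ elements of $F$ lie in $t$ pairwise-distinct boxes of the partition, so the partition has at least $t$ boxes, whence $2^{D(f)} \geq t$ and $D(f) \geq \log t$. I do not expect a serious obstacle here; the only thing to be careful about is making explicit that a deterministic protocol induces a partition of the whole input space into combinatorial $k$-boxes that are monochromatic for $f$ — this is the standard rectangle/box argument, stated for the $k$-party case in Section~\ref{s:prel}, and it is exactly the same reasoning invoked earlier for the rank lower bound $\log\rank(M_f) \leq D(f)$. So the ``hard part'' is really just packaging this standard fact; the combinatorial heart of the argument is the one-line closure property of boxes under the $\select^{A,B}$ operation.
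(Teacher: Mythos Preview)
Your proposal is correct and is exactly the argument the paper itself sketches in the paragraph immediately preceding the lemma (the paper does not give a separate formal proof, only the intuitive remark that the $\select^{A,B}(x,z)$ fill out a $2\times\cdots\times 2$ box, so two fooling-set elements cannot share a monochromatic box). The only cosmetic issue is that you reuse the letter $B$ for both the box $B=X_1\times\cdots\times X_k$ and the second part of the partition $\{A,B\}$ of $[k]$; renaming one of them would avoid confusion.
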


%%%%%CHECK THIS!

This subsection contains two results. First we show that a three-argument
function can have much larger fooling sets than any of its induced
two-argument functions. After that, we compare the fooling set lower
bound with the rank lower bound.

\begin{theorem}
\label{t:fooling} There exists a function $\fff$ such that $f$ has
a $1$-fooling set of size $2^n$ but no induced function of $f$ has
a fooling set of size $\omega(n)$.
\end{theorem}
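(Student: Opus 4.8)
The plan is to construct $f$ explicitly from a combinatorial object that has large ``diagonal structure'' in three dimensions but weak structure in any two-dimensional projection. The natural candidate is to use a permutation-like object on $[2^n]$: think of the three inputs $x,y,z \in \bit^n$ as elements of $\Z_{2^n}$ (or of some group $G$ of size $2^n$) and set $f(x,y,z)=1$ iff $x + y + z = 0$ in the group, and $0$ otherwise. I would first verify the fooling-set lower bound: take $F = \{(x,y,z) : x+y+z=0\}$, which has size exactly $2^n$ (choose $x,y$ freely, then $z$ is determined), and check condition (ii). Given two distinct triples $(x,y,z),(x',y',z') \in F$, some coordinate differs, say the $z$-coordinates differ; then the mixture $\select^{\{3\},\{1,2\}}$ applied to these (keeping $x,y$ from the first and $z'$ from the second) gives a triple $(x,y,z')$ with $x+y+z' = z'-z \neq 0$, so $f=0=1-b$ there. (One must check that when the triples agree on, say, the $z$-coordinate, they must differ elsewhere, and the analogous mixture still works; since the group sum is symmetric this is routine.) Hence $F$ is a $1$-fooling set of size $2^n$ and $D(f) \geq n$.

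Next I would bound the fooling sets of the induced functions. Consider $f^1 = f^{\{1\},\{2,3\}}$: Alice holds $x$, Bob holds $(y,z)$, and $f^1(x,(y,z)) = 1$ iff $x = -(y+z)$. For Bob, the pair $(y,z)$ only matters through the single value $w := -(y+z) \in \Z_{2^n}$, so $f^1$ is (a copy of, padded with irrelevant coordinates) the equality function $\mathrm{EQ}_n(x,w)$. A $b$-fooling set for the equality function has size at most $2$ in the two-party case (this is classical: $\mathrm{EQ}$ has a $0$-cover and $1$-cover both of polynomial size, and more directly its fooling sets are tiny). The only subtlety is that padding with dummy coordinates could in principle create larger fooling sets — but since $f^1(x,(y,z))$ depends on $(y,z)$ only through $w$, any fooling set for $f^1$ projects to one for $\mathrm{EQ}_n$ with the key property preserved (a mixture in the $\{2,3\}$-block still only changes $w$), so the size bound is inherited, giving fooling sets of size $O(1)$, certainly $\omega(n)$-free. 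The same argument applies to $f^2$ and $f^3$ by the symmetry of the sum.

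The main obstacle, and the part requiring the most care, is precisely the reduction ``induced function of $f$ = padded equality, hence small fooling sets.'' The definition of a $k$-argument fooling set allows mixtures over \emph{all} partitions $A,B$ of $[k]$, so when I restrict attention to, say, $f^1$ viewed as a two-argument function I must make sure I am using the correct (two-party) fooling-set notion and that the collapse $(y,z)\mapsto w$ genuinely preserves the ``at least one of $f(x_i,y_j),f(x_j,y_i)$ equals $1-b$'' condition — i.e. that no information is smuggled in through the coordinate that is being projected away. I expect this to go through cleanly because the value of $f^1$ is literally a function of $(x,w)$ alone, but it is worth stating as a small lemma: if $g(a,b) = h(a,\phi(b))$ for some map $\phi$, then any fooling set for $g$ has size at most that of the largest fooling set for $h$. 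With that lemma in hand the theorem follows: $f$ has a $1$-fooling set of size $2^n$ while each $f^i$ is a projected copy of $\mathrm{EQ}_n$ and hence has only constant-size fooling sets, in particular none of size $\omega(n)$.
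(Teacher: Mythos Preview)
Your construction does not work. The fatal error is the claim that the equality function has only constant-size fooling sets. In fact $\mathrm{EQ}_n$ has a $1$-fooling set of size $2^n$: the diagonal $\{(a,a):a\in\bit^n\}$. For distinct $(a,a)$ and $(b,b)$ we have $\mathrm{EQ}(a,b)=0=1-b$, so condition~(ii) holds. Hence your induced function $f^1$, being a padded copy of $\mathrm{EQ}_n$, inherits a $1$-fooling set of size $2^n$ (for instance $\{(x,(0,-x)):x\in\bit^n\}$), and the theorem fails for your $f$. You seem to be confusing fooling-set size with cover number or with nondeterministic/randomized complexity: it is true that $\mathrm{EQ}_n$ has a $0$-cover of size $O(n)$ and hence $0$-fooling sets of size $O(n)$, but for $b=1$ equality is the canonical example of a function with a maximal fooling set. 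More generally, any two-argument function whose matrix contains an $m\times m$ permutation submatrix has a $1$-fooling set of size $m$; each of your $f^i$ contains a full $2^n\times 2^n$ identity submatrix. No ``sum equals zero'' construction can work, for precisely this reason.

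The paper's proof is non-explicit and uses the probabilistic method. One sets $f(v,v,v)=1$ for all $v$; on triples with exactly two equal coordinates one chooses values randomly subject to the constraint that for every pair $v_1\neq v_2$ exactly one of the six such triples gets value $0$; and on triples with three distinct coordinates one chooses values uniformly at random. The first two choices guarantee deterministically that the diagonal $\{(v,v,v):v\in\bit^n\}$ is a $1$-fooling set of size $2^n$. For the induced functions one fixes a candidate set $F$ of size $t=cn$ and observes that in the corresponding $t\times t$ matrix all but $O(t)$ entries are ``free'' (random); each free pair of entries independently fails the fooling-set condition with probability at least $1/36$, so $F$ is a fooling set with probability at most $(35/36)^{\Omega(t^2)}$. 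A union bound over the at most $2^{3nt}$ candidates, with $c$ chosen large enough, finishes the argument. The idea you are missing is that the off-diagonal values of $f$ must be sufficiently \emph{unstructured} to destroy large fooling sets in every two-dimensional flattening; a random choice achieves this, whereas your algebraic construction is far too structured.
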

\begin{proof}
The function is defined using the probabilistic method, i.e., we
look at some distribution on functions and prove that at least one
function in the support of this distribution satisfies the fooling
set requirements. The inputs $(x,y,z) \in \bit^n \times \bit^n
\times \bit^n$ are partitioned into three classes:
\begin{description}
\item[Three identical values.] If $x=y=z$, set $f(x,y,z) = 1$. We
later refer to these inputs as type~(a) inputs.
\item[Two identical values.] For every two distinct values $v_1,
v_2 \in \bit^n$, pick at random
%%% $r \in [6]$. According to this value, set
one of the six inputs $(v_1,v_1, v_2),(v_1,v_2, v_1),(v_2,v_1,
v_1),(v_1,v_2, v_2),(v_2,v_1, v_2)$ and $(v_2,v_2, v_1)$ and set
the value of $f$ on it to be $0$ and on the other five inputs to
be $1$.
\remove{   %%%% I thought that the table is not very useful
$$\begin{array}{ccccccc}
r & f(v_1,v_1, v_2) & f(v_1,v_2, v_1) & f(v_2,v_1, v_1) & f(v_1,v_2, v_2) & f(v_2,v_1, v_2) & f(v_2,v_2, v_1) \\
1 &        0        &         1       &         1       &         1       &         1       &         1       \\
2 &        1        &         0       &         1       &         1       &         1       &         1       \\
3 &        1        &         1       &         0       &         1       &         1       &         1       \\
4 &        1        &         1       &         1       &         0       &         1       &         1       \\
5 &        1        &         1       &         1       &         1       &         0       &         1       \\
6 &        1        &         1       &         1       &         1       &         1       &         0       \\
\end{array}$$
}
We later refer to these inputs as type~(b) inputs.
\item[Three distinct values.] For every $(x,y,z)$ such that $x$,
$y$ and $z$ are all distinct, pick at random $b \in \bit$ and set
$f(x,y,z) = b$. We later refer to these inputs as type~(c) inputs.
\end{description}

\begin{observation}
The function $f$, chosen as above, has a $1$-fooling set of size
$2^n$, with probability $1$.
\end{observation}
\begin{proof}
By the definition of $f$, the set $F = \set{(v,v,v):v \in \bit^n}$
is always a $1$-fooling set of size $2^n$. (Note that for this
claim we only rely on the inputs of types~(a) and~(b).)
\end{proof}

We proceed to show that, with positive probability (over the
choice of $f$), none of the induced functions of $f$ has a fooling
set of size $\omega(n)$. We analyze the probability that the
function $f^1(x,(y,z))$ has a fooling set of size $t = cn$, for
some constant $c>0$ to be set later, and show that it is smaller
than $\frac{1}{3}$. For symmetry reasons, the same analysis is
valid for the other two induced functions, and so the probability
that \emph{any} of them has a large fooling set is strictly
smaller than $1$, using a simple union bound.

Therefore, we focus on the induced function $f^1$. We prove that
the probability that a certain set $F$ of size $t$ is a fooling
set is extremely small. Then we multiply this probability by the
number of choices for $F$ and still get a probability smaller than
$\frac{1}{3}$.

\begin{observation}
\label{o:numberOfFs} The number of distinct choices of a set
$F=\set{(x_i, (y_i,z_i))}_{i \in [t]}$ is at most $2^{3nt}$.
\end{observation}

Let $F = \set{(x_i, (y_i,z_i))}_{i \in [t]}$ be a set of size $t$,
and $b\in \bit$. Consider the matrix $M_F \in \bit^{t \times t}$,
with rows labelled by $x_1, \dots, x_t$ and columns labelled by
$(y_1,z_1), \dots, (y_t,z_t)$. There are two types of columns in
$M_F$: (i) columns labelled by $(y,z)$ where $y=z$; and (ii)
columns labelled by $(y,z)$ where $y \neq z$. In every column of
type~(i), there is at most one entry that corresponds to an input
of type~(a), and all the rest correspond to inputs of type~(b). We
call the former a \emph{fixed} entry and the latter \emph{free}
entries. In every column of type~(ii) there are at most two
entries that correspond to inputs of type~(b) and the rest
correspond to inputs of type~(c). Again, we call the former
entries \emph{fixed} entries and the latter \emph{free} entries.
All together, out of the $t^2$ entries of the matrix $M_F$, there
are at most $2t$ fixed entries, and at least $t^2 - 2t$ free
entries.
\begin{observation}
For every $i \neq j \in [t]$, if both $M_F[x_i, (y_j,z_j)]$ and
$M_F[x_j, (y_i,z_i)]$ are free entries then $\Pr[f^1(x_i, (y_j,
z_j)) = b \mbox{ and } f^1(x_j, (y_i, z_i)) = b] \ge 1/36$.
\end{observation}

Note that the probability that two different pairs of inputs
satisfy the fooling set requirements are not independent because
of the manner in which we assigned the values of type~(b).
However, we can partition the entries into classes of size $6$,
such that every set of entries with at most one representative
from each class are independent. Hence we can pick $(t^2 - 2t) /
12$ pairs $i, j \in [t]$ such that the entries of $M_F$
corresponding to each of these pairs are set independently.
Therefore, the probability that the values assigned to all these
pairs respect the fooling set requirements is at most
$(\frac{35}{36})^{\frac{t^2 - 2t}{12}}$. The same analysis is
valid for the probability that $F$ is a $(1-b)$-fooling set. Thus,
setting the constant $c$ (where $t=cn$) such that $2^{3nt}
(\frac{35}{36})^{\frac{t^2 - 2t}{12}} < \frac{1}{6}$, we get that
there exists a function $f$ that satisfies the requirements of
Theorem~\ref{t:fooling}.
\end{proof}

Next, we show that the fooling set method cannot prove lower bounds
that are significantly stronger than the lower bounds proved for the
same function using the rank method. This extends a known result for the
two-party case~\cite{DHS96}, and strengthens the view that the behavior
of the rank method in the $k$-party case is similar to its behavior in
the two-party case.

\begin{theorem} \label{t:rankVersusFS}
Let $f:(\bit^n)^k \to \bit$ be a $k$-argument function, and assume that
$f$ has a fooling set of size $t$. Then $\rank(M_f) \geq t^{1/(2^k-2)}$.
\end{theorem}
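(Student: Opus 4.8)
The plan is to extend the two-party argument of~\cite{DHS96}. For $k=2$ that argument restricts $M_f$ to the $t$ rows and $t$ columns indexed by the fooling-set points, obtaining a $t\times t$ matrix $M'$ with $M'[i,i]=f(x^{(i)})$; the fooling-set property forces the entrywise (Hadamard) product $M'\odot(M')^{T}$ to equal the identity $I_t$, and since $\rank(A\odot B)\le\rank(A)\cdot\rank(B)$ one gets $\rank(M_f)^{2}\ge\rank(I_t)=t$. For general $k$ I would exhibit $2^k-2$ matrices, each of rank at most $\rank(M_f)$, whose entrywise product is $I_t$. Write $\mathcal S:=\{S\subseteq[k]:0<|S|<k\}$, so $|\mathcal S|=2^k-2$. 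We may assume $F=\{x^{(1)},\dots,x^{(t)}\}$ is a $1$-fooling set; the case of a $0$-fooling set follows by replacing $f$ with $1-f$, whose representing tensor is $J-M_f$ ($J$ the all-ones tensor), of rank within $1$ of $\rank(M_f)$. Rank may be taken over any field.

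First I would pass from $M_f$ to the $k$-tensor $N$ of format $t\times\cdots\times t$ given by $N[i_1,\dots,i_k]:=f\bigl(x^{(i_1)}_1,\dots,x^{(i_k)}_k\bigr)$; since restricting a tensor along each axis to a sublist of its coordinates cannot increase rank, $\rank N\le\rank M_f$. The point is a combinatorial observation about the $2^k$ entries $N[e_1,\dots,e_k]$ with all $e_\ell\in\{i,j\}$, for a fixed pair $i\ne j$: these are exactly the values of $f$ on the $k$-box $\{\select^{A,B}(x^{(i)},x^{(j)})\}_{A,B}$. The two \emph{pure} corners ($e_\ell\equiv i$ and $e_\ell\equiv j$) equal $f(x^{(i)})=f(x^{(j)})=1$, so by condition~(ii) of the fooling-set definition at least one of the remaining $2^k-2$ corners equals $0$. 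Then, for each $S\in\mathcal S$, I would define the $t\times t$ matrix $A_S$ by $A_S[i,j]:=N[e_1,\dots,e_k]$ with $e_\ell=i$ for $\ell\in S$ and $e_\ell=j$ for $\ell\notin S$ (equivalently, $A_S[i,j]=f(\select^{S,[k]\setminus S}(x^{(i)},x^{(j)}))$). From a rank decomposition $N=\sum_p u^{(1)}_p\otimes\cdots\otimes u^{(k)}_p$ one reads off $A_S=\sum_p\bigl(\bigodot_{\ell\in S}u^{(\ell)}_p\bigr)\bigl(\bigodot_{\ell\notin S}u^{(\ell)}_p\bigr)^{T}$, where $\bigodot$ is the entrywise product of vectors; hence $\rank A_S\le\rank N\le\rank M_f$. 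By construction $A_S[i,i]=f(x^{(i)})=1$ for every $i$, while for $i\ne j$ the family $\{A_S[i,j]:S\in\mathcal S\}$ runs over precisely the $2^k-2$ non-pure corners of the box spanned by $x^{(i)}$ and $x^{(j)}$ (with $S$ recording which positions are fed the first argument), so at least one of these entries is $0$.

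Finally I would take $P$ to be the $t\times t$ matrix with $P[i,j]:=\prod_{S\in\mathcal S}A_S[i,j]$, the entrywise product of the $A_S$. Then $P[i,i]=1$ for all $i$ and $P[i,j]=0$ for $i\ne j$, so $P=I_t$. Submultiplicativity of rank under entrywise product (if $A=\sum_p x_py_p^{T}$ and $B=\sum_q u_qv_q^{T}$ then $A\odot B=\sum_{p,q}(x_p\odot u_q)(y_p\odot v_q)^{T}$), iterated over the $|\mathcal S|=2^k-2$ factors, gives
\[ t=\rank(I_t)=\rank(P)\le\prod_{S\in\mathcal S}\rank(A_S)\le\rank(M_f)^{2^k-2}, \]
and therefore $\rank(M_f)\ge t^{1/(2^k-2)}$. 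The two generic ingredients---that neither the passage from $M_f$ to $N$ nor from $N$ to $A_S$ raises rank, and Hadamard submultiplicativity for matrices---are routine; the step I expect to require the most care is the bookkeeping that makes $P$ exactly the identity: checking that the $2^k-2$ matrices $A_S$ realise, at each off-diagonal position $(i,j)$, \emph{all} $2^k-2$ non-pure corners of the corresponding $2\times\cdots\times2$ sub-box, so that the fooling-set property annihilates every off-diagonal entry of $P$ while leaving each diagonal entry equal to $1$.
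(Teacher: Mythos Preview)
Your proof is correct and follows essentially the same route as the paper's: the paper defines, for each ordered bipartition $(A,B)$ of $[k]$, the $t\times t$ matrix $U^{A,B}[x,z]=f(\select^{A,B}(x,z))$ (your $A_S$ with $S=A$), observes that it is a submatrix of the $2$-flattening $\flat_{A,B}M_f$ and hence has rank at most $\rank M_f$, and then uses that the Hadamard product of all $2^k-2$ of these matrices is the identity together with submultiplicativity of rank under Hadamard product. Your intermediate tensor $N$ and the explicit rank-decomposition formula for $A_S$ are just a concrete unpacking of the flattening-and-restriction step, and your explicit reduction of the $0$-fooling-set case to the $1$-case via $1-f$ is a minor addendum that the paper leaves implicit.
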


The proof uses the following elementary lemma.

\begin{lemma} \label{l:Hadamard}
If $U$ and $V$ are $m \times m$-matrices over the field $\F$, then
the rank of their \emph{Hadamard product} $U \odot V$ defined by
$(U \odot V)[x,y]=U[x,y]V[x,y]$ is at most $\rank U \cdot \rank
V$. \qed
\end{lemma}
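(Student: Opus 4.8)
The plan is to use rank-one decompositions of $U$ and $V$ together with two elementary observations: the Hadamard product distributes over sums of matrices, and the Hadamard product of two rank-one matrices is again rank-one. Write $r := \rank U$ and $s := \rank V$ and fix decompositions $U = \sum_{i=1}^r a_i \otimes b_i$ and $V = \sum_{j=1}^s c_j \otimes d_j$ with $a_i, b_i, c_j, d_j \in \F^m$, using the same notation as for rank-one tensors (so that $U[x,y] = \sum_i a_i[x]\,b_i[y]$, and likewise for $V$); such decompositions exist precisely because $U$ and $V$ have the stated ranks.

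Next I would expand the entries of the Hadamard product directly. For all $x,y \in [m]$,
\[ (U \odot V)[x,y] = \Bigl(\sum_{i=1}^r a_i[x]\,b_i[y]\Bigr)\Bigl(\sum_{j=1}^s c_j[x]\,d_j[y]\Bigr) = \sum_{i=1}^r \sum_{j=1}^s \bigl(a_i[x]\,c_j[x]\bigr)\bigl(b_i[y]\,d_j[y]\bigr), \]
which is exactly the $(x,y)$-entry of $\sum_{i=1}^r \sum_{j=1}^s (a_i \odot c_j) \otimes (b_i \odot d_j)$. Hence $U \odot V$ is a sum of $rs$ rank-one $m \times m$ matrices, and by subadditivity of rank for $2$-tensors we conclude $\rank(U \odot V) \le rs = \rank U \cdot \rank V$, as claimed.

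There is essentially no obstacle here; the only things one checks are the trivial facts that each summand $(a_i \odot c_j)\otimes(b_i \odot d_j)$ has rank at most one and that rank is subadditive, both of which are instances of the elementary tensor facts recorded in the preliminaries and hold over any field. An alternative, equally short argument notes that $U \odot V$ is the submatrix of the Kronecker product $U \otimes V$ obtained by restricting to row indices of the form $(x,x)$ and column indices of the form $(y,y)$, that $\rank(U \otimes V) = \rank U \cdot \rank V$, and that passing to a submatrix cannot increase rank; but the direct computation above is self-contained and is the version I would write out.
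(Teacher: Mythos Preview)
Your proof is correct. Interestingly, the paper takes precisely the alternative route you sketch at the end: it views $U \otimes V$ as a $4$-tensor, invokes submultiplicativity to bound its rank by $\rank U \cdot \rank V$, flattens along $\{1,3\},\{2,4\}$ to obtain the Kronecker product, and then observes that the Hadamard product is the submatrix on ``diagonal'' row and column indices $(x,x)$, $(y,y)$. Your primary argument via direct expansion of rank-one decompositions is more elementary and entirely self-contained---it does not rely on the tensor formalism of flattenings and submultiplicativity---whereas the paper's version is chosen to stay within the tensor language already developed in Section~\ref{ss:k}. Both are standard; your version has the virtue of making the rank-one witnesses $(a_i \odot c_j) \otimes (b_i \odot d_j)$ explicit, while the paper's version emphasizes the structural fact that Hadamard is a principal submatrix of Kronecker, which is what gets reused in the proof of Theorem~\ref{t:rankVersusFS}.
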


\begin{proof}
The $4$-tensor $U \otimes V$, which at position $[x,y,u,v]$ has
entry $U[x,y]V[u,v]$, has rank at most $\rank U \cdot \rank V$ by
submultiplicativity of the rank. Hence the same is true for its
$2$-flattening $\flat_{\{1,3\},\{2,4\}} U \otimes V$ corresponding to the partition $\{1,3\},\{2,4\}$,
which is an $m^2 \times m^2$-matrix with value $U[x,y]V[u,v]$ at
position $[[x,u],[y,v]]$. This flattening is known as the {\em
Kronecker product} of $U$ and $V$ and its rank is actually {\em
equal} to $\rank U \cdot \rank V$ for reasons that are irrelevant
here. Finally, the Hadamard product is the submatrix of the
Kronecker product corresponding to rows and columns labelled by
pairs of the form $(x,x)$ and $(y,y)$, respectively.
\end{proof}

\begin{proof}[of Theorem \ref{t:rankVersusFS}]
For each partition $A,B$ of $[k]$ consider the $t \times t$-matrix
$U^{A,B}$ whose rows and columns are labelled by elements of $F$ and
whose entry at position $[x,z]$ equals $f(\select^{A,B}(x,z))$. It
follows from the definition of $\select^{A,B}(x,z)$ that $U^{A,B}$ is
a submatrix of the flattening of $M_f$ corresponding to the partition
$A,B$ (perhaps up to repeated rows if several distinct elements of $F$
have the same $A$-parts, and similarly for columns). Hence we have
\[ \rank U^{A,B} \leq \rank M_f. \]
Now the Hadamard product of $U^{A,B}$ over all partitions $A,B$ of $[k]$
into two non-empty parts is the identity matrix---here we use that $F$
is a fooling set---and hence of rank $t$. Using Lemma \ref{l:Hadamard}
we find that
\[ (\rank M_f)^{2^k - 2} \geq t, \]
which proves the theorem.
\end{proof}

\remove{

\begin{theorem}
\label{t:rankVersusFS} Let $\fff$ be a function, and assume $f$
has a fooling set of size $t$. Then $\rank(M_f) \geq t^{1/6}$.
\end{theorem}
\begin{proof}
The proof is with respect to $1$-fooling sets. The
$0$-fooling sets case is similar.
%%% via {\bar f}
Let $F = \set{(x_i,y_i,z_i)}_{i\in[t]}$ be a $1$-fooling set of
$f$ of size $t$. Define the function
$g:\bit^{2n}\times\bit^{2n}\times\bit^{2n} \ra \bit$ in the
following way:
$$g(x_1x_2,y_1y_2,z_1z_2) \eqdef f(x_2,y_1,z_1)\wedge f(x_1,y_2,z_1)\wedge
f(x_1,y_1,z_2)\wedge f(x_1,y_2,z_2)\wedge f(x_2,y_1,z_2)\wedge
f(x_2,y_2,z_1).$$

\begin{lemma}
\label{c:largeRankG} $\rank(M_g) \geq t$.
\end{lemma}

\begin{proof}
Consider the $t \times t \times t$ submatrix of $M_g$ whose first
dimension is labeled with the elements $\set{x_ix_i}_{i\in[t]}$,
the second dimension is labeled with the elements
$\set{y_iy_i}_{i\in[t]}$ and the third dimension is labeled with
$\set{z_iz_i}_{i\in[t]}$. This matrix has the value $1$ on the
diagonal entries $(x_ix_i,y_iy_i,z_iz_i)$ and $0$ elsewhere (this
is because, for each off-diagonal entry in this submatrix, the 6
$f$-values in the definition of $g$ correspond to the $6$ values
that come from $F$ being a $1$-fooling set; this implies that at
least one of them is $0$). Finally, it is easy to see that the
rank of this submatrix is $t$.
\end{proof}

\begin{lemma}
\label{c:relatingFandG}
 $\rank(M_g) \leq \rank(M_f)^6$.
\end{lemma}

The following definitions and lemmata will be helpful in proving
Lemma~\ref{c:relatingFandG} (again, they essentially extend
analogue definition and lemmata with respect to two-dimensional
matrices).

\begin{definition}[Kronecker Product]
Let $A, B \in \F^{m \times m \times m}$. The \emph{Kronecker
product} of $A$ and $B$ is the matrix $A \otimes B \in \F^{m^2
\times m^2 \times m^2}$ defined as follows: $$A \otimes
B[x_1x_2,y_1y_2,z_1z_2] = A[x_1, y_1, z_1]B[x_2,y_2,z_2].$$
\end{definition}

\begin{lemma}
 \label{c:Kron}
Let $A, B \in \F^{m \times m \times m}$. Then $\rank(A \otimes B)
\le \rank(A)\cdot\rank(B)$.
\end{lemma}

The Kronecker product is in fact a flattening of the tensor product of
$A$ and $B$, so this statement follows from the general properties in
the paragraph on tensor rank. For the convenience of the
reader we include a separate proof.

\begin{proof}
If $A,B$ are rank-1 matrices, the lemma holds with equality;
namely, if $A = u_A \otimes v_A \otimes w_A$ and $B = u_B \otimes
v_B \otimes w_B$ then $A\otimes B = (u_A\otimes u_B) \otimes
(v_A\otimes v_B) \otimes (w_A\otimes w_B)$. Now, for general
matrices, write $A = \sum_{i=1}^{\rank(A)}A_i$ and $B =
\sum_{j=1}^{\rank(B)}B_j$ where the $A_i$'s and $B_j$'s are rank-1 matrices. It is easy to verify that the Kronecker product
satisfies the equality $(A_1+A_2) \otimes (B_1+B_2) = (A_1\otimes
B_1) + (A_1\otimes B_2) + (A_2\otimes B_1) + (A_2\otimes B_2)$.
Hence
$$\rank(A \otimes B) =
\rank(\sum_{i=1}^{\rank(A)}\sum_{j=1}^{\rank(B)}A_i \otimes B_j)
\leq \sum_{i=1}^{\rank(A)}\sum_{j=1}^{\rank(B)} \rank(A_i \otimes
B_j) = \rank(A)\rank(B).$$
\end{proof}

\begin{definition}[Entrywise Product]
Let $A, B \in \F^{m \times m \times m}$. The \emph{entrywise
product} of $A$ and $B$ is the matrix $A \odot B \in \F^{m \times
m \times m}$ defined as follows: $$A \odot B[x,y,z] =
A[x,y,z]B[x,y,z].$$
\end{definition}

\begin{lemma}
 \label{c:entryWiseMul}
Let $A, B \in \F^{m \times m \times m}$. Then $\rank(A \odot B)
\leq \rank(A)\cdot\rank(B)$.
\end{lemma}
\begin{proof}
It suffices to note that $A\odot B$ is a submatrix of $A\otimes B$
and then apply Lemma~\ref{c:Kron}. Indeed, the entries of
$A\otimes B$ where $x_1=x_2,y_1=y_2,z_1=z_2$ form such a
submatrix.
\remove{ %%%% OLD PROOF
It is easy to verify that the lemma holds for rank $1$ matrices.
Write $A = \sum_{i=1}^{\rank(A)}A_i$ where $A_i$ are rank $1$
matrices, and $B = \sum_{j=1}^{\rank(B)}B_j$ where $B_j$ are rank-1 matrices. Entrywise product satisfies the equality $(A_1+A_2)
\odot (B_1+B_2) = (A_1\odot B_1) + (A_1\odot B_2) + (A_2\odot B_1)
+ (A_2\odot B_2)$. Hence $$\rank(A \odot B) =
\rank(\sum_{i=1}^{\rank(A)}\sum_{j=1}^{\rank(B)}A_i \odot B_j)
\leq \sum_{i=1}^{\rank(A)}\sum_{j=1}^{\rank(B)} \rank(A_i \odot
B_j) \leq \rank(A)\rank(B).$$ }
\end{proof}

\medskip

\begin{proofapp}{(of Lemma~\ref{c:relatingFandG})}
Define the following functions $g_1,
g_2,g_3:\bit^{2n}\times\bit^{2n}\times\bit^{2n} \ra \bit$:
$g_1(x_1x_2,y_1y_2,z_1z_2) = f(x_2,y_1,z_1)f(x_1,y_2,z_2)$,
$g_2(x_1x_2,y_1y_2,z_1z_2) = f(x_1,y_2,z_1)f(x_2,y_1,z_2)$, and
$g_3(x_1x_2,y_1y_2,z_1z_2) = f(x_1,y_1,z_2)f(x_2,y_2,z_1)$. Note
that $$g(x_1x_2,y_1y_2,z_1z_2) =
g_1(x_1x_2,y_1y_2,z_1z_2)g_2(x_1x_2,y_1y_2,z_1z_2)g_3(x_1x_2,y_1y_2,z_1z_2).$$
Also note that, for each $\ell\in[3]$, the matrix $M_{g_\ell}$ is
a (different) permutation on the Kronecker product of $M_f$ with
itself. Therefore, by Lemma~\ref{c:Kron}, for each $\ell \in [3]$
we have that $\rank(M_{g_\ell}) = \rank(M_f)^2$. Next, since $M_g
= M_{g_1} \odot M_{g_2} \odot M_{g_3}$, we may apply
Lemma~\ref{c:entryWiseMul} to get that $$\rank(M_g) \leq
\rank(M_{g_1})\rank(M_{g_2})\rank(M_{g_3})\leq(\rank(M_f)^2)^3 =
\rank(M_f)^6.$$
\end{proofapp}

By Lemmata~\ref{c:largeRankG} and~\ref{c:relatingFandG}, we get
that $t \leq \rank(M_g) \leq \rank(M_f)^6$. This concludes the
proof of Theorem~\ref{t:rankVersusFS}.

\end{proof}
}

\myparagraph{Acknowledgments.} We wish to thank Ronald de Wolf for useful discussions
and, in particular, for referring us to~\cite{CKS03}.

\remove{
\appendix

\section{Meaningful name.............}
\label{a:further}

To prove the claim, we start with an arbitrary 3-argument function
$f$ and we would like to prove that its communication complexity
is at most $\log \rank(M_f)$ raised to some fixed power $c$. For
this, recall that $f^1,f^2$ and $f^3$ denote the three induced
functions of $f$ (as in previous proofs, $f^3$ will not be
utilized). The proof makes use of the 1-partition
complexity\footnote{The $1$-partition complexity of a function
$g:\bit^n\times\bit^n\to\bit$, denoted $P^1(g)$, is the minimal
number of $1$-monochromatic rectangles that are needed to
partition the $1$'s of $g$. Clearly, $\log P^1(g)\le D(g)$ and,
assuming (S2), it is $O(\log^c$rank$(M_g))$.} of $f^1,f^2$. We
will use the 1-partitions of $f^1,f^2$ to construct a 1-partition
for $f$ of an appropriate size, from which we will derive a bound
on its communication complexity. Since, the rank of of $M_f$ is at
least the rank of $M_{f^1}$ and $M_{f^2}$ then the claim will
follow.

Let $P_1$ be a $1$-partition of $f^1$ of size $t_1$. That is, $P_1
= \set{R_1, \dots, R_{t_1}}$, where $R_i = X_i \times W_i$ with
$X_i \se \bit^n$ and $W_i \se \bit^n \times \bit^n$. Similarly,
let $P_2$ be a $1$-partition of $f^2$ of size $t_2$. That is, $P_2
= \set{R'_1, \dots, R'_{t_2}}$, where $R'_i = Y_i \times W'_i$
with $Y_i \se \bit^n$ and $W'_i \se \bit^n \times \bit^n$. Denote
$N = 2^n$ and, for each $i \in [t_2]$, let $c_i \in \bit^{N}$ be
the characteristic vector of the set $Y_i$. Furthermore, let $C$
be the $N \times t_2$ matrix whose columns are the vectors $c_1,
\dots, c_{t_2}$. Assume that $\rank(C) = t_2$ (otherwise, omit
columns that depend on previous ones).

Let $z \in \bit^n$ and consider the matrix $A_z$ (as defined in
the proof of Lemma~\ref{c:cuberank}; i.e., $A_z[x,y] =
M_f[x,y,z]$). Since $P_2$ in particular induces a partition of
$A_z$, then every column of $A_z$ can be written as a combination
of the columns $c_1, \dots, c_{t_2}$. Furthermore, this
combination has only $0/1$ coefficients. Hence we can write $A_z =
C U_z$, where $U_z \in \bit^{t_2 \times N}$. Observe that since
$\rank(C) = t_2$, then the row space of $U_z$ is equal to the row
space of $A_z$ (since $C$ contains a $t_2 \times t_2$ invertible
submatrix).

Next, consider the following $t_2N \times N$ matrix $M$, which
consists of all the matrices $U_z$ put one on top of the other
(there are $N$ such matrices, each of dimension $t_2 \times N$).
Note that $M$ is a $0/1$ matrix and that the rows of $R$ span the
rows of $M$. This is true since for every $z \in \bit^n$, the rows
of $R$ span $A_z$, and $\row(U_z) = \row(A_z)$. Hence the rows of
$R$ span the rows of every $U_z$, and thus span the rows of $M$.
Therefore, $\rank(M) \leq t_1$.

Hence, by the log-rank conjecture, we get that there is a protocol
for $M$ with communication complexity $O(\log^c t_1)$. Hence there
is a partition $P$ of the matrix $M$ of size $q$, where $q =
O(2^{\log ^c t_1}) = O(t_1^{\log^{c-1}t_1})$. Write $P =
\set{R''_1, \dots, R''_q}$, where for every $j \in [q]$, $R''_j =
X''_j \times W''_j$, with $X''_j \se \bit^n$ and $W''_j \se \bit^n
\times [t_2]$. Let $r_1, \dots, r_q$ be the characteristic vectors
of $X''_1, \dots, X''_q$. Then, since $P$ is a partition, we get
that $r_1 \dots, r_q$ combinatorially span every row of the matrix
$M$. Hence for every $z \in \bit^n$, the rows $r_1, \dots, r_s$
combinatorially span the rows of $U_z$.

We are now ready to construct the partition of the cube matrix
$A$. For every $i\in[t_2]$ and every $j \in [q]$, let $R_{i,j} =
X''_j \times Y_i$. We show that this set of rectangles contains a
partition of $A_z$ for every $z \in \bit^n$. This is since $r_1,
\dots, r_q$ combinatorially span $U_z$, and thus every rectangle
in the partition imposed by the equation $C U_z = A_z$, can be
constructed from the above set of rectangles. Therefore, picking
every rectangle in the values of $z$ that use it, we get the
partition of $A$ into $t_2 t_1^{\log^{c-1}t_1}$ monochromatic
boxes. Hence the communication complexity of $A$ is at most
$O((\log t_2 + \log^c t_1)^2)$, where the quadratic factor comes
from translating the partition we got into a deterministic
protocol, using a standard transformation (as in the
non-deterministic to the deterministic transformation).
%% using the generalized clis protocol.

}

\end{document}